\newcommand{\D}{\mathrm{d}}
\newcommand{\DD}{\mathrm{D}}
\newtheorem{thm}{Theorem}[section]
\newenvironment{thmqed}{\pushQED{\qed}\begin{thm}}{\popQED\end{thm}} 
\newtheorem{lem}[thm]{Lemma}
\newenvironment{lemqed}{\pushQED{\qed}\begin{lem}}{\popQED\end{lem}}
\newtheorem{prop}[thm]{Proposition}
\theoremstyle{definition}
\newtheorem{defn}[thm]{Definition}
\newtheorem{exmp}[thm]{Example}
\numberwithin{equation}{section}
\title{Classical perspectives on the Newton--Wigner position observable}
\author[1,a]{Philip K. Schwartz}
\author[1,2,b]{Domenico Giulini}
\affil[1]{Institute for Theoretical Physics,
	Leibniz University Hannover, \par
	Appelstra{\ss}e 2, 30167 Hannover, Germany}
\affil[2]{Center of Applied Space Technology and Microgravity,
	University of Bremen, \par
	Am Fallturm 1, 28359 Bremen, Germany}
\affil[a]{\normalfont\texttt{\href{mailto:philip.schwartz@itp.uni-hannover.de}{philip.schwartz@itp.uni-hannover.de}}}
\affil[b]{\normalfont\texttt{\href{mailto:giulini@itp.uni-hannover.de}{giulini@itp.uni-hannover.de}}}
\date{}
\begin{document}
\maketitle

\begin{abstract}
	\noindent
	This paper deals with the Newton--Wigner position 
	observable for Poincaré-invariant \emph{classical} 
	systems. We prove an existence and uniqueness 
	theorem for elementary systems that parallels the 
	well-known Newton--Wigner theorem in the quantum 
	context. We also discuss and justify the geometric 
	interpretation of the Newton--Wigner position as 
	`centre of spin', already proposed by Fleming in 
	1965 again in the quantum context.
\end{abstract}

\section{Introduction}

Even though we shall in this paper exclusively deal 
with \emph{classical} (i.e.\ non-quantum) aspects of 
the Newton--Wigner position observable, we wish to 
start with a brief discussion of its historic 
origin, which is based in the early history of 
relativistic quantum field theory (RQFT). After that 
we will remark on its classical importance and 
give an outline of this paper.  

The conceptual problem of how to properly `localise' a 
physical system `in space' has a very long history, the 
roots of which extend to pre-Newtonian times. Newtonian 
concepts of space, time, and point particles allowed 
for sufficiently useful localisation schemes, either in 
terms of the position of the particle itself if an 
elementary (i.e.\ indecomposable) systems is considered, 
or in terms of weighted convex sums of instantaneous 
particle positions for systems composed of many particles, 
like, e.g., the centre of mass. These concepts satisfy 
the expected covariance properties under spatial 
translations and rotations and readily translate to 
ordinary, Galilei-invariant Quantum Mechanics, where 
concepts like `position operators' and the associated 
projection operators for positions within any measurable 
subset of space can be defined, again fulfilling the 
expected transformation rules under spatial motions.

However, serious difficulties with naive localisation 
concepts arose in attempts to combine Quantum Mechanics 
with Special Relativity. For example, as already 
observed in 1928 by Breit \cite{Breit:1928} and again 
in 1930--31 by Schrödinger 
\cite{Schroedinger:1930,Schroedinger:1931}, a naive 
concept of `position' for the Dirac equation leads to 
unexpected and apparently paradoxical results, like the 
infamous `Zitterbewegung'. It soon became clear that 
naive translations of concepts familiar from 
non-relativistic Quantum Mechanics did not 
result in satisfactory results. This had to do with 
the fact that spatially localised wave functions (e.g.\ 
those of compact spatial support) necessarily contained 
negative-energy modes in their Fourier decomposition 
and that negative-energy modes would necessarily be 
introduced if a `naive position operator' (like 
multiplying the wave function with the position 
coordinate) were applied to a positive-energy state. 
Physically this could be seen as an inevitable result 
of pair production that sets in once the bounds on 
localisation come close to the Compton wavelength. 
Would that argument put an end to any further attempt 
to define localised states in a relativistic context?

This question was analysed and answered in the 
negative in 1949 by Newton and Wigner 
\cite{Newton.Wigner:1949}. Their method was to 
write down axioms for what it meant that a 
system is `localised in space at a given time' 
and then investigate existence as well as uniqueness 
for corresponding position operators. It turned 
out that existence and uniqueness are indeed given 
for elementary systems (fields being elements of 
irreducible representations of the Poincaré group), 
except for massless fields of higher helicity. 
A more rigorous derivation was later given by Wightman 
\cite{Wightman:1962} who also pointed out the 
connection with the representation-theoretic notion 
of `imprimitivity systems'\footnote{A good text-book 
reference explaining the notion of imprimitivity 
systems is \cite{Varadarajan:1985}.}.

It should be emphasised that the Newton--Wigner 
notion of localisation still suffers from the 
acausal spreading of localisation domains that 
is typical of fields satisfying special-relativistic 
wave equations, an observation made many times in the 
literature in one form or another; see, e.g.,  \cite{Segal.Goodman:1965,Hegerfeldt:1974,Ruijsenaars:1981}. 
This means that if a system is Newton--Wigner localised 
at a point in space at a time $t$, it is not strictly 
localised anymore in any bounded region of space at 
any time later than $t$ 
\cite{Newton.Wigner:1949,Wightman.Schweber:1955}. 
In other words, the spatial bounds of localisation do 
not develop in time within the causal future of the 
original domain. Issues of that sort, and related ones 
concerning, in particular, the relation between 
Newton--Wigner localisation and the Reeh-Schlieder 
theorem in RQFT have been discussed many times in the 
literature even up to the more recent past, 
with sometimes conflicting statements as to their 
apparent paradoxical interpretations; see, e.g., 
\cite{Fleming.Butterfield:1999} and 
\cite{Fleming:2000,Halvorson:2001}. 
For us, these issues are not in the focus of our 
interest.

Clearly, due to its historical development, 
most discussions of Newton--Wigner localisation 
put their emphasis on its relevance for RQFT. 
This sometimes seems to mask the fact that the 
problem of localisation is likewise present for 
classical systems, in particular if they are 
`relativistic' in the sense of Special Relativity. 
In fact, special-relativistic systems react with 
characteristic ambiguities if one tries to 
introduce the familiar notions of `centre of mass' 
that one uses successfully in Newtonian physics. 
A first comprehensive discussion on relativistic 
notions of `position' of various `centres' was 
given by Pryce in his 1948 paper \cite{Pryce:1948}. 
He starts with a list of no less than six different 
definitions, which Pryce labelled 
alphabetically from (a) through (f) and which for 
systems of point masses may briefly be characterised 
as follows: (a) and (c) correspond to taking 
the convex affine combination of spatial positions 
in each Lorentz frame with weights being equal to 
the rest masses and dynamical masses respectively, 
whereas (b) and (d) correspond to restricting this 
procedure to the zero-momentum frame and then 
transforming this position to other frames by 
Lorentz boosts. Possibility (e) is a combination 
of (c) and (d), determined by the condition that 
the spatial components of the ensuing position 
observable shall (Poisson) commute. This combination 
is, in fact, the Newton--Wigner position in 
its classical guise. Finally, possibility (f) is 
a variant of (b) in which the distinguished frame is 
not that of zero-momentum but that in which the 
`mass centre' as defined by (a) is at rest.

In 1965, Fleming gave a more geometric discussion 
in \cite{Fleming:1965a} that highlighted the 
group-theoretic properties (regarding the group 
of spacetime automorphisms) underlying the 
constructions and thereby clarified many of the 
sometimes controversial issues regarding 
`covariance'. Fleming focussed on three position 
observables which he called `centre of inertia', 
`centre of mass', and the Newton--Wigner position 
observable, for which he, at the very end of his 
paper and almost in passing, suggested the name 
`centre of spin'. In our paper we shall give a 
more detailed geometric justification for that 
name.

Pryce, Fleming, and other contemporary 
commentators mainly had RQFT in mind as the 
main target for their considerations, presumably 
because the study of deeply relativistic 
\emph{classical} systems was simply not considered 
relevant at that time. But that has clearly 
changed with the advent of modern relativistic 
astrophysics. For example, modern analytical studies 
of close compact binary-star systems also make use 
of various definitions of `centre of mass' in an 
attempt to separate the `overall' from the `internal' 
motion as far as possible. Note that, as is well 
known, special-relativistic many-particle systems 
will generally show dynamical couplings of internal 
and external degrees of freedom which cannot be 
eliminated altogether by more clever choices of 
external coordinates. But, in that respect, it 
turns out that modern treatments of gravitationally 
interacting two-body systems within the theoretical 
framework of Hamiltonian General Relativity show 
a clear preference for the Newton--Wigner position 
\cite{Steinhoff:2011,Schaefer.Jaranowski:2018}, 
emphasising once more its distinguished role, 
now in a purely classical context. In passing 
we mention the importance and long history 
connected with the `problem of motion' in 
General Relativity, i.e.\ the problem of how 
to associate a timelike worldline with the 
field-theoretic evolution of an extended and 
structured body, a glimpse of which may be 
obtained by the recent collection 
\cite{Puetzfeld.EtAl:2015}. A concise account 
of the various definitions of `centres' that 
have been used in the context of General 
Relativity is given in \cite{Costa.EtAl:2018}, which 
also contains most of the original references 
in its bibliography. In our opinion, all this 
provides sufficient motivation for further attempts 
to work out the characteristic properties of 
Newton--Wigner localisation in the \emph{classical} realm.

The plan of our investigation is as follows. 
After setting up our notation and conventions in 
section \ref{sec:notation}, where we also 
introduce some mathematical background, we 
prove a few results in 
section \ref{sec:NW_centre_of_spin} which are intended to 
explain in what sense the Newton--Wigner position 
is indeed a `centre of spin' and in what sense 
it is uniquely so (theorem \ref{thm:NW_SSC_centre_of_spin}). 
We continue in section \ref{sec:NW_theorem} 
with the statement and proof of a classical 
analogue of the Newton--Wigner theorem, according to 
which the Newton--Wigner position is the 
unique observable satisfying a set of axioms. 
The result is presented in theorem \ref{thm:NW} 
and in a slightly different formulation in theorem \ref{thm:NW2}. 
They say that for a classical elementary 
Poincaré-invariant system with timelike four-momentum 
(as classified by Arens \cite{Arens:1971a,Arens:1971b}), 
there is a unique observable transforming 
`as a position should' under translations, rotations, 
and time reversal, having Poisson-commuting 
components, and satisfying a regularity condition 
(being $C^1$ on all of phase space). This observable 
is the Newton--Wigner position.

\section{Notation and conventions}
\label{sec:notation}

This section is meant to list our notation and 
conventions in the general sense, by also providing 
some background material on the geometric and 
group-theoretic setting onto which the following 
two sections are based.

\subsection{Minkowski spacetime and the Poincaré group}

We use the `mostly plus' $(-{++}+)$ signature 
convention for the spacetime metric and stick, 
as indicated, to four dimensions. This is not to say 
that our analysis cannot be generalised to other 
dimensions. In fact, as will become clear as we 
proceed, many of our statements have an obvious 
generalisation to other, in particular higher 
dimensions. On the other hand, as will also become 
clear, there are a few constructions which would 
definitely look different in other dimensions, 
like, e.g., the use of the Pauli--Lubański `vector' 
in section \ref{sec:PauliLubanski}, which 
becomes an $(n-3)$-form in $n$ dimensions, 
or the classification of elementary systems.

The velocity of light will be denoted by $c$, and 
\emph{not} set equal to $1$. Affine Minkowski spacetime 
will be denoted by $M$, and the corresponding vector 
space of `difference vectors' will be denoted by $V$. 
The Minkowski metric will be denoted by 
$\eta \colon V\times V \to \mathbb R$. 
The isomorphism of $V$ with its dual space 
$V^*$ induced by $\eta$ (`index lowering') will be denoted by a 
superscript `flat' symbol $\flat$, i.e.\ for a 
vector $v \in V$ the corresponding one-form is 
$v^\flat := \eta(v, \cdot) \in V^*$. The inverse 
isomorphism (`index raising') will be denoted by a superscript 
sharp symbol $\sharp$. Note that under 
a Lorentz transformation $\Lambda$, $v\in V$ transforms 
under the defining representation, 
$(\Lambda, v) \mapsto \Lambda v$, whereas its image 
$v^\flat\in V^*$ under the $\eta$-induced isomorphism 
transforms under the inverse transposed,
$(\Lambda, v^\flat) \mapsto (\Lambda^{-1})^\top v^\flat = v^\flat \circ \Lambda^{-1}$.

We fix an orientation and a time orientation on $M$. 
The (homogeneous) Lorentz group, i.e.\ the group of 
linear isometries of $(V,\eta)$, will be denoted 
by $\mathcal L := \mathsf{O}(V,\eta)$. The Poincaré 
group, i.e.\ the group of affine isometries of 
$(M,\eta)$, will be denoted by $\mathcal P$. 
The proper orthochronous Lorentz and Poincaré 
groups (i.e.\ the connected components of the identity) 
will be denoted by $\mathcal L_+^\uparrow$ and 
$\mathcal P_+^\uparrow$, respectively\footnote
	{Note that speaking of just orthochronous or proper 
	Lorentz / Poincaré transformations does not make 
	invariant sense without specifying a time direction.}.

We employ standard index notation for Minkowski 
spacetime, using lowercase Greek letters for 
spacetime indices. When working with respect to bases, 
we will, unless otherwise stated, assume them to be 
positively oriented and orthonormal, and we will 
use 0 for the timelike and lowercase Latin letters 
for spatial indices. We will adhere to standard practice 
in physics where lowering and raising of indices are 
done while keeping the same kernel symbol; i.e.\ for a 
vector $v \in V$ with components $v^\mu$, the components 
of the corresponding one-form $v^\flat \in V^*$ will be 
denoted simply by $v_\mu$. For the sake of notational 
clarity, we will sometimes denote the Minkowski 
inner product of two vectors $u, v \in V$ simply by
\begin{equation}
	u \cdot v := \eta(u,v) = u_\mu v^\mu.
\end{equation}

We fix, once and for all, a reference point / 
\emph{origin} $o \in M$ in (affine) Minkowski 
spacetime, allowing us to identify $M$ with its 
corresponding vector space $V$ (identifying the 
reference point $o \in M$ with the zero vector 
$0 \in V$, i.e.\ via $M \ni x \mapsto (x-o) \in V$), 
which we will do most of the time. Using the 
reference point $o \in M$, the Poincaré group splits 
as a semidirect product
\begin{equation}
	\mathcal P = \mathcal L \ltimes V
\end{equation}
where the Lorentz group factor in this decomposition 
arises as the stabiliser of the reference point~-- 
i.e.\ a Poincaré transformation is considered a 
homogeneous Lorentz transformation if and only if 
it leaves $o$ invariant. Thus, a homogeneous 
Lorentz transformation $\Lambda \in \mathcal L$ 
acts on a point $x \in M \equiv V$ as 
$(\Lambda x)^\mu = \Lambda^\mu_{\hphantom{\mu}\nu} x^\nu$, 
and a Poincaré transformation $(\Lambda,a) \in \mathcal P$ 
acts as $((\Lambda,a) \cdot x)^\mu = \Lambda^\mu_{\hphantom{\mu}\nu} x^\nu + a^\mu$.

We will sometimes make use of the set of spacelike 
hyperplanes in (affine) Minkowski spacetime $M$, 
which we will denote by
\begin{equation}
	\label{eq:def_SpHp}
	\mathsf{SpHP} := \{\Sigma \subset M : \Sigma \; \text{spacelike hyperplane}\}.
\end{equation}
Since the image of a spacelike hyperplane under 
a Poincaré transformation is again a spacelike 
hyperplane, there is a natural action of the Poincaré 
group on $\mathsf{SpHP}$, which we will denote by 
$((\Lambda, a), \Sigma) \mapsto (\Lambda, a) \cdot \Sigma$ 
and spell out in more detail in equation 
\eqref{eq:action_on_SpHP} below.

\subsection{The Poincaré algebra}

When considering the Lie algebra $\mathfrak p$ 
of the Poincaré group (or symplectic representations 
thereof), we will denote the generators of translations 
by $P_\mu$ such that $a^\mu P_\mu$ is the 
`infinitesimal transformation' corresponding 
to the translation by $a\in V$, and the generators 
of homogeneous Lorentz transformations (with respect 
to the chosen origin $o$) by $J_{\mu\nu}$, 
such that $- \frac{1}{2} \omega^{\mu\nu} J_{\mu\nu}$ 
is the `infinitesimal transformation' corresponding 
to the Lorentz transformation $\exp(\omega)\in\mathcal
L_+^\uparrow \subset \mathsf{GL}(V)$ for 
$\omega \in \mathfrak l = \mathrm{Lie}(\mathcal L) \subset \mathrm{End}(V)$.

Since we are using the $(-{++}+)$ signature 
convention, the minus sign in the expression 
$- \frac{1}{2} \omega^{\mu\nu} J_{\mu\nu}$ is 
necessary in order that $J_{ab}$ generate rotations 
in the $e_a$--$e_b$ plane from $e_a$ towards $e_b$, 
which is the convention we want to adopt. A detailed 
discussion of these issues regarding sign conventions 
for the generators of special orthogonal groups 
can be found in appendix \ref{app:sign_convention_so}. 
Moreover, if $u \in V$ is a future-directed unit 
timelike vector, then $c P_\mu u^\mu$ (i.e.\ $c P_0$ 
in the Lorentz frame defined by $u=e_0$), which is 
\emph{minus} the energy in the frame defined by $u$, 
is the generator of active time translations in the 
direction of $u$. Therefore, with our conventions, 
for the case of causal four-momentum $P \in V$ the energy 
(with respect to future-directed time directions) 
is positive if and only if $P$ is future-directed.

With our conventions, the commutation relations 
for the Poincaré generators are as follows:
\begin{subequations} 
\label{eq:Poinc_gen}
\begin{align}
\label{eq:Poinc_gen-a}
	[P_\mu, P_\nu] &= 0 \\
\label{eq:Poinc_gen-b}
	[J_{\mu\nu}, P_\rho] &= \eta_{\mu\rho} P_\nu - \eta_{\nu\rho} P_\mu \\
\label{eq:Poinc_gen-c}
	[J_{\mu\nu}, J_{\rho\sigma}] &= \eta_{\mu\rho} J_{\nu\sigma} + \text{(antisymm.)} \nonumber\\
		&= \Big(\eta_{\mu\rho} J_{\nu\sigma} - (\mu \leftrightarrow \nu)\Big) - \Big(\rho \leftrightarrow \sigma\Big)
\end{align}
\end{subequations}
As indicated, the abbreviation `antisymm.', which 
we shall also use in the sequel of this paper,  
stands for the additional three terms that one obtains 
by first antisymmetrising (without a factor of $1/2$) 
in the first pair of indices on the left hand 
side, here $(\mu\nu)$, and then the ensuing 
combination once more in the second set of 
indices, here $(\rho\sigma)$, again without a 
factor $1/2$.

For later reference we already point out here that 
this Lie algebra has several convenient features, one 
of which being that it is \emph{perfect}. This 
means that it equals its own derived algebra 
or, in other words, that each of its element is 
expressible as a linear combination of Lie brackets. 
This is easy to see directly from \eqref{eq:Poinc_gen}. 
Indeed, contraction of  \eqref{eq:Poinc_gen-b} and 
\eqref{eq:Poinc_gen-c} with $\eta^{\mu\rho}$ gives 
$(\dim V-1)P_\nu$ in the first and 
$(\dim V-2)J_{\nu\sigma}$ in the second case, showing that 
each basis element $P_\nu$ and  $J_{\nu\sigma}$ is 
a linear combination of Lie brackets if $\dim V>2$. 
Being perfect implies that its first cohomology is 
trivial. Moreover, the second cohomology is also 
trivial. Being perfect and of trivial second 
cohomology will later allow us to conclude 
that symplectic actions are necessarily Poisson 
actions.

\subsection{Symplectic geometry}

We employ the following sign conventions for symplectic
geometry (as used by Abraham and Marsden in 
\cite{Abraham.Marsden:1978}, but different to those 
of Arnold in \cite{Arnold:1989}). Let $(\Gamma, \omega)$ 
be a symplectic manifold. For a smooth function 
$f \in C^\infty(\Gamma)$, we define the Hamiltonian 
vector field $X_f\in ST(\Gamma)$ ($ST$ denoting 
sections in the tangent bundle) corresponding to 
$f$ by
\begin{equation} \label{eq:def_Ham_VF}
	\iota_{X_f} \omega := \omega(X_f, \cdot) = \D f,
\end{equation}
where $\iota$ denotes the interior product 
between vector fields and differential forms.
The Poisson bracket of two smooth functions 
$f,g \in C^\infty(\Gamma)$ is then defined as
\begin{equation}
	\{f, g\} := \omega(X_f, X_g) = \D f(X_g) = \iota_{X_g} \D f.
\end{equation}
These conventions give the usual coordinate forms 
of the Hamiltonian flow equations and the Poisson 
bracket if the symplectic form $\omega$ takes the 
coordinate form (sign-opposite to that in \cite{Arnold:1989})
\begin{equation}
	\omega = \D q^a \wedge \D p_a\,.
\end{equation}
It is important to note that $C^\infty(\Gamma)$ 
as well as $ST(\Gamma)$ are (infinite dimensional) 
Lie algebras with respect to the Poisson bracket 
and the commutator respectively, and that, with 
respect to these Lie structures, the map 
$C^\infty(\Gamma) \to ST(\Gamma), f \mapsto X_f$ 
is a Lie \emph{anti}-homomorphism, that is,  
\begin{equation} \label{eq:HVF_anti_hom}
	X_{\{f,g\}} = -\,[X_f,X_g].	
\end{equation}
The proof is simple once one recalls from 
\eqref{eq:def_Ham_VF} that the Lie derivative of 
$\omega$ with respect to any Hamiltonian vector 
field vanishes: 
$L_{X_f} \omega = \D(\iota_{X_f} \omega) + \iota_{X_f} \D\omega = \D^2 f = 0$.
Therefore, 
$\D\{f,g\} = \D(\iota_{X_g} \D f) = L_{X_g} \D f =
L_{X_g}(\iota_{X_f} \omega) = - \iota_{[X_f,X_g]} \omega$.

By saying that a one-parameter group 
$\phi_s \colon \Gamma \to \Gamma$ of symplectomorphisms 
is generated by a function $g \in C^\infty(\Gamma)$, 
we mean that $\phi_s$ is the flow of the Hamiltonian 
vector field to $g$, i.e.\ that
\begin{equation}
	\frac{\D}{\D s} \phi_s(\gamma) = X_g(\phi_s(\gamma))
\end{equation}
for $\gamma \in \Gamma$, or equivalently
\begin{align} \label{eq:sympl_gen_PB}
	\frac{\D}{\D s} (f \circ \phi_s) &= \Big(\D f (X_g)\Big) \circ \phi_s \nonumber\\
	&= \{f,g\} \circ \phi_s
\end{align}
for $f \in C^\infty(\Gamma)$. Here both sides of
\eqref{eq:sympl_gen_PB} are to be understood as 
evaluated pointwise.

\subsection{Poincaré-invariant Hamiltonian systems and their momentum maps}

A classical Poincaré-invariant system will be described 
by a phase space $(\Gamma, \omega)$ -- i.e.\ a symplectic 
manifold -- with a symplectic action
\begin{equation}
	\Phi \colon \mathcal P \times \Gamma \to \Gamma, \; ((\Lambda, a), \gamma) \mapsto \Phi_{(\Lambda, a)} (\gamma)
\end{equation}
of the Poincaré group (in fact, for most of our purposes 
an action of $\mathcal P_+^\uparrow$ is enough). We will 
take $\Phi$ to be a `left' action, i.e.\ to satisfy\footnote
	{We refer to \cite{Giulini:2015} for a detailed discussion 
	of left versus right actions and the corresponding sign 
	conventions that will also play an important role in 
	the sequel of this paper.}
\begin{equation}
	\Phi_{(\Lambda_1, a_1)} \circ \Phi_{(\Lambda_2, a_2)} = \Phi_{(\Lambda_1 \Lambda_2, a_1 + \Lambda_1 a_2)} \; .
\end{equation}
We will denote such systems as 
$(\Gamma, \omega, \Phi)$.

The left action $\Phi$ of $\mathcal{P}$ on $\Gamma$
induces vector fields $V_\xi$ on $\Gamma$ (the 
so-called `fundamental vector fields'), one for 
each $\xi$ in the Lie algebra $\mathfrak{p}$ of 
$\mathcal{P}$. They are given by
\begin{equation}
	V_\xi(\gamma) := \left. \frac{\D}{\D s} \Phi_{\exp(s\xi)}(\gamma) \right\vert_{s=0} \; ,	
\end{equation}  
so that the map $\mathfrak{p} \to ST(\Gamma), 
\xi\mapsto V_\xi$, given by the differential 
of $\Phi$ with respect to its first argument and 
evaluated at the group identity, is clearly linear. 
In fact, it is straightforward to show that it 
is an anti-homomorphism from the Lie algebra 
$\mathfrak{p}$ into the Lie algebra $ST(M)$\footnote
	{Had we chosen $\Phi$ to be a right action, we would 
	have obtained a proper Lie homomorphism; compare 
	\cite[appendix B]{Giulini:2015}.}, i.e.
\begin{equation} \label{eq:fundVF_anti_hom}
	\bigl[V_{\xi_1}, V_{\xi_2}\bigr] = - V_{[\xi_1, \xi_2]}.
\end{equation}
Moreover, a similar calculation shows 
\cite[appendix B]{Giulini:2015}
\begin{equation} \label{eq:fundVF_equivariance}
	(\DD \Phi_{(\Lambda,a)}) \circ V_\xi = V_{\mathrm{Ad}_{(\Lambda,a)}(\xi)} \circ \Phi_{(\Lambda,a)} \; ,
\end{equation}
where $\DD \Phi_{(\Lambda,a)} \colon T\Gamma \to T\Gamma$ 
denotes the differential of 
$\Phi_{(\Lambda,a)} \colon \Gamma \to \Gamma$.

As $\mathcal{P}$ acts by symplectomorphisms, 
we clearly have 
\begin{equation}
	L_{V_\xi} \omega = 0 \quad \text{for all} \; \xi \in \mathfrak{p}.
\end{equation}
As $\omega$ is closed, the latter equation implies 
that $\iota_{V_\xi} \omega$ is likewise closed. 
Hence, by Poincaré's lemma, locally (i.e.\ in a 
neighbourhood of each point) there exists a local 
function $f_\xi$ such that $\D f_\xi = \iota_{V_\xi} \omega$. 
This function is unique up to the addition of a 
$\xi$-dependent constant. Again by Poincaré's lemma 
we could argue that $f_\xi$ existed globally if 
$\Gamma$ were simply connected. But, fortunately, 
we do not need that extra assumption.

In fact, since we are dealing with a special 
group, the function $f_\xi$ always exists 
\emph{globally}, irrespective of $\Gamma$'s 
topology, so that each $V_\xi$ is a globally 
defined Hamiltonian vector field (i.e.\ each 
one-parameter group 
$\Phi_{\exp(s\xi)}\colon \Gamma \to \Gamma$ of
symplectomorphisms is generated, in the sense 
of \eqref{eq:sympl_gen_PB}, by the corresponding
function $f_\xi$). Moreover,  
the constants up to which the collection of 
$f_\xi$ is defined can be chosen in such a 
way that the map $\xi\mapsto f_\xi$ from the 
Lie algebra $\mathfrak{p}$ to the Lie 
algebra $C^\infty(\Gamma)$ (the Lie product 
of the latter being the Poisson bracket) is a 
Lie homomorphism, i.e.\ 
\begin{equation} \label{eq:Poinc_gen_hom}
	\left\{f_{\xi_1},f_{\xi_2}\right\} = f_{[\xi_1,\xi_2]}.
\end{equation}
This clearly fixes the constants uniquely.  
Note that, according to \eqref{eq:fundVF_anti_hom} 
and \eqref{eq:HVF_anti_hom}, both maps, 
$\xi\mapsto V_\xi$ and $V_\xi\mapsto f_\xi$, are 
Lie \emph{anti}-homomorphisms. Hence their combination 
$\xi\mapsto f_\xi$ is a proper Lie homomorphism 
(no minus sign on the right-hand side of 
\eqref{eq:Poinc_gen_hom}).

A symplectic action of a group whose generating 
vector fields are globally Hamiltonian and satisfy 
\eqref{eq:Poinc_gen_hom} is called a 
\emph{Poisson action}. The statement made here 
is that if $\dim V>2$, any symplectic action of 
the Poincaré of group is always a Poisson action.
 This is a non-trivial statement 
depending crucially on properties of the groups's 
Lie algebra. For example, it would fail to hold for 
the Galilei group (homogeneous as well as 
inhomogeneous) which, despite being just a 
contraction of the Poincaré group, behaves 
quite differently in that matter and, consequently, 
also as regards the problem of localisation 
\cite{Inoenue.Wigner:1952,Wightman:1962}.

The underlying reason for why $f_\xi$ exists 
globally is that $\mathfrak{p}$ 
is perfect, as already shown above. Indeed, the 
proof is quite simple: Since $\xi = [\xi_1,\xi_2]$ 
(or sums of such commutators) we have 
$V_\xi = -[V_{\xi_1},V_{\xi_2}]$ and hence 
$\D f_\xi = -\iota_{[V_{\xi_1},V_{\xi_2}]} \omega = 
- L_{V_{\xi_1}} (\iota_{V_{\xi_2}} \omega) = 
\D(\omega(V_{\xi_1},V_{\xi_2}))$, 
so that 
$f_\xi = \omega(V_{\xi_1},V_{\xi_2})+\text{const.}$ 
which is globally defined. The other statement 
concerning the choice of constants that guarantee 
\eqref{eq:Poinc_gen_hom} is an immediate consequence 
of the triviality of the second cohomology of 
$\mathfrak{p}$, the proof of which may, e.g., be 
looked up in \cite[\S\,3.3]{Woodhouse:1980}.

Having established global existence and uniqueness
of the generators $f_\xi$ satisfying
$\omega(V_\xi, \cdot) = \D f_\xi$, we can now 
deduce the transformation property of $f_\xi$
under the action of $\mathcal{P}$. Taking the 
pullback of the equation $\omega(V_\xi, \cdot) = \D f_\xi$ with 
$\Phi_{(\Lambda,a)^{-1}}$ and using the invariance of 
$\omega$ as well as \eqref{eq:fundVF_equivariance}, we 
immediately deduce
\begin{equation} \label{eq:Poinc_gen_Ad_equivariance}
	\Phi_{(\Lambda,a)^{-1}}^* f_\xi := f_\xi \circ \Phi_{(\Lambda,a)^{-1}} = f_{\mathrm{Ad}_{(\Lambda,a)}(\xi)} \; ,
\end{equation}
which may also be read as the invariance of the 
real-valued function $f\colon \mathfrak{p} \times \Gamma \to \mathbb R$, 
$(\xi,\gamma) \mapsto f_\xi(\gamma)$, under the 
combined left action of $\mathcal{P}$ 
on $\mathfrak{p}\times\Gamma$ given by 
$\mathrm{Ad}\times\Phi$. Alternatively, since 
$\xi\mapsto f_\xi$ is linear, we may regard 
$f$ as $\mathfrak{p}^*$-valued function on 
$\Gamma$, where $\mathfrak{p}^*$ denotes the 
vector space dual to $\mathfrak{p}$. 
This map is called the \emph{momentum map}\footnote
	{See \cite[chap.\,4.2]{Abraham.Marsden:1978} 
	for a general discussion on the notion of `momentum map'
	and also \cite{Giulini:2015} for an account of its 
	use and properties restricted to the case of 
	Poincaré-invariant systems.}
for the given system $(\Gamma,\omega,\Phi)$, 
which according to \eqref{eq:Poinc_gen_Ad_equivariance} is then
$\mathrm{Ad}^*$-equivariant:
\begin{equation} \label{eq:momentum_map_Ad-star_equivariance}
	f \circ \Phi_{(\Lambda,a)} = \mathrm{Ad}^*_{(\Lambda,a)} \circ f
	\iff
	\mathrm{Ad}^*_{(\Lambda,a)} \circ f \circ \Phi_{(\Lambda,a)^{-1}} = f
\end{equation}
The second expression is again meant to 
stress that the condition of equivariance is 
equivalent to the invariance of the function $f$ 
under the combined left actions in its domain 
and target spaces (invariance of the graph). 
Note that $\mathrm{Ad}^*$ denotes the co-adjoint representation of  $\mathcal{P}$ on $\mathfrak{p}^*$, 
given by $\mathrm{Ad}^*_{(\Lambda,a)} := (\mathrm{Ad}_{(\Lambda,a)^{-1}})^\top$ with superscript $\top$
denoting the transposed map.

Points in $\Gamma$ faithfully represent the state 
of the physical system whereas observables 
correspond to functions on $\Gamma$. In order to 
implement time evolution we shall employ a 
`classical Heisenberg picture', in which the 
phase space point remains the same at all 
times, whereas the evolution will correspond to 
the changes of observables according to their 
association to different spacelike 
hyperplanes in spacetime. Although this is 
different from the (`Schrödinger picture') 
approach usually taken in classical mechanics 
(where the state of the system is given by a 
phase space point changing in `time', which is 
an external parameter), this point of view is 
clearly better adapted to the 
Poincaré-relativistic framework, in which 
there simply is no absolute notion of time.

Choosing a set 
of ten basis vectors $(P_\mu,J_{\mu\nu})$ for 
$\mathfrak{p}$ obeying \eqref{eq:Poinc_gen} (compare 
appendix \ref{app:sign_convention_so}), we can contract the 
$\mathfrak{p}^*$-valued momentum map with each of 
these basis vectors in order to obtain the 
corresponding ten real-valued component functions 
of the momentum map. By some abuse of notation 
we shall call these component functions by the same 
letters  $(P_\mu,J_{\mu\nu})$ as the Lie algebra 
elements themselves. Equation 
\eqref{eq:Poinc_gen_hom} now says that the map 
that sends the Lie algebra elements 
$P_\mu$ and $J_{\mu\nu}$ in $\mathfrak{p}$ 
to the corresponding component functions 
of the momentum map is a Lie homomorphism from 
$\mathfrak{p}$ to the Lie algebra 
$C^\infty(\Gamma,\mathbb{R})$ (the latter with 
Poisson bracket as Lie multiplication):
\begin{subequations}
	\begin{align}
	\{P_\mu, P_\nu\} &= 0 \\
	\{J_{\mu\nu}, P_\rho\} &= \eta_{\mu\rho} P_\nu - \eta_{\nu\rho} P_\mu \\
	\{J_{\mu\nu}, J_{\rho\sigma}\} &= \eta_{\mu\rho} J_{\nu\sigma} + \text{(antisymm.)}
	\end{align}
\end{subequations}
The $\mathrm{Ad}^*$-equivariance of the momentum map can now be 
written down in component form if we first set 
$\xi = P_\mu$ and then $\xi = J_{\mu\nu}$. 
Indeed, considering
\eqref{eq:Poinc_gen_Ad_equivariance} and 
recalling our abuse of notation in 
denoting the real-valued phase space functions 
$f_{P_\mu}$ and $f_{J_{\mu\nu}}$
again with the letters $P_\mu$ and  
$J_{\mu\nu}$, we can immediately read from 
equation \eqref{eq:ad_rep} of appendix 
\ref{app:adj_rep}, in which we need to 
replace $e_a$ with $P_\mu$ and $B_{ab}$ 
with $-J_{\mu\nu}$ according to 
\eqref{eq:sign_convention_so} of 
appendix \ref{app:sign_convention_so}, that
\begin{subequations} \label{eq:coadjoint_rep_components}
\begin{alignat}{2}
\label{eq:coadjoint_rep_components_a}
	& P_\mu \circ \Phi_{(\Lambda, a)}
	&&= (\Lambda^{-1})^\nu_{\phantom{\nu}\mu} \, P_\nu \; , \\
\label{eq:coadjoint_rep_components_b}
	& J_{\mu\nu} \circ \Phi_{(\Lambda, a)}
	&&= (\Lambda^{-1})^\rho_{\phantom{\rho}\mu} (\Lambda^{-1})^\sigma_{\phantom{\sigma}\nu} \,
	J_{\rho\sigma} 
	+ a_\mu (\Lambda^{-1})^\rho_{\phantom{\rho}\nu} \, P_\rho
	- a_\nu (\Lambda^{-1})^\rho_{\phantom{\rho}\mu} \, P_\rho \; .
\end{alignat}
\end{subequations}
Note that the left-hand sides of \eqref{eq:coadjoint_rep_components} are 
precisely what we need; that is, we need 
the composition with $\Phi_{(\Lambda, a)}$ rather than 
$\Phi_{(\Lambda, a)^{-1}}$ to evaluate the momenta 
$P_\mu$ and $J_{\mu\nu}$ on the actively 
Poincaré-displaced phase space points. Note also 
that if we had put the 
indices upstairs and had used, e.g., 
$P^\mu = \eta^{\mu\nu} P_{\nu}$ rather than $P_\mu$ 
then the right-hand side of \eqref{eq:coadjoint_rep_components_a} would read 
$\Lambda^\mu_{\phantom{\mu}\nu} \, P^\nu$, and 
correspondingly in 
\eqref{eq:coadjoint_rep_components_b}. 
Finally recall that the last term on the right-hand 
side of \eqref{eq:coadjoint_rep_components_b} just 
reflects the familiar transformation of angular 
momentum (the momentum associated to spatial 
rotations) under spatial translations, which is 
typical for the \emph{co}-adjoint representation, 
which here gets extended to the momentum 
associated to boost transformations\footnote
	{One easily checks that the signs are 
	right: translating a system whose momentum points 
	in $y$-direction by a positive amount into the 
	$x$-direction should enhance the angular momentum 
	in $z$-direction. This is just what 
	\eqref{eq:coadjoint_rep_components_b} implies.}.

\subsection{The Pauli--Lubański vector}
\label{sec:PauliLubanski}
Given a classical Poincaré-invariant system, the 
Pauli--Lubański vector $W$ is the $V$-valued phase 
space function defined in components by
\begin{equation}
	W_\mu = - \frac{1}{2} \varepsilon_{\mu\nu\rho\sigma} P^\nu J^{\rho\sigma}
\end{equation}
where $\varepsilon$ denotes the volume form of 
Minkowski space (whose components in a positively 
oriented orthonormal basis are just given by the 
usual totally antisymmetric symbol, with 
$\varepsilon_{0123} = +1$). The sign convention 
in this definition can be understood as follows. 
We imagine a situation in which $P$ is timelike and 
future-directed (positive energy, see above), 
and consider the spatial components of $W$ with 
respect to an orthonormal basis $\{e_0, \dots, e_3\}$ 
of $V$ with $(e_0)^\mu = P^\mu / \sqrt{-P_\nu P^\nu}$ 
(`momentum rest frame'). For those, we obtain
\begin{equation}
	\frac{W_a}{\sqrt{-P_\mu P^\mu}} = - \frac{1}{2} \varepsilon_{a0\rho\sigma} J^{\rho\sigma} = \frac{1}{2} {^{(3)}\varepsilon}_{abc} J^{bc}
\end{equation}
where the ${^{(3)}\varepsilon}_{abc}$ is the 
three-dimensional antisymmetric symbol / the 
components of the spatial volume form. Thus, since 
$J^{bc} = J_{bc}$ generates rotations from $e_b$ 
towards $e_c$, we see that $W_a/\sqrt{-P_\mu P^\mu}$ 
generates rotations `along the $e_a$ axis' in the usual, 
three-dimensional sense. Thus, $W/\sqrt{-P_\mu P^\mu}$ 
can be interpreted as the `spatial spin vector' in the 
momentum rest frame, which is the usual interpretation 
of the Pauli--Lubański vector.

Rewriting the definition of $W$ as
\begin{equation}
	W_\mu = - \frac{1}{2} \varepsilon_{\mu\nu\rho\sigma} P^\nu J^{\rho\sigma} = \frac{1}{2} \varepsilon_{\nu\rho\sigma\mu} P^\nu J^{\rho\sigma} = \frac{1}{3!} \varepsilon_{\nu\rho\sigma\mu} (P^\flat \wedge J)^{\nu\rho\sigma},
\end{equation}
we see that in the language of exterior algebra
\begin{equation}
	W = (*(P^\flat\wedge J))^\sharp
\end{equation}
where $*$ is the Hodge star operator. Here we 
use the standard sign conventions for the Hodge 
operator, i.e.\ the definition 
$\alpha \wedge *\beta = \eta(\alpha,\beta) \, \varepsilon$; 
see for example \cite{Straumann:2013} or \cite[appendix A]{Giulini:2015}.

\section{The Newton--Wigner position as a `centre of spin'}
\label{sec:NW_centre_of_spin}

In this section we will explain our understanding 
and present our geometric clarification of 
Fleming's statement in \cite{Fleming:1965a} that 
the Newton--Wigner position may be understood as 
a `centre of spin'. To this end, we introduce 
Fleming's geometric framework for special-relativistic 
position observables, and then discuss the definition 
of position observables by spin supplementary conditions 
(SSCs). Finally, we introduce the notion of a position 
observable being a `centre of spin', and prove that 
the Newton--Wigner position is the only continuous 
position observable defined by an SSC that represents 
a centre of spin in that sense.

\subsection{Position observables on spacelike hyperplanes}
\label{sec:pos_obs}

We start by describing the general framework 
developed by Fleming in \cite{Fleming:1965a} 
and also \cite{Fleming:1966} for the description 
of special-relativistic position observables, 
translated to our case of classical systems from 
Fleming's quantum language. Consider a classical 
Poincaré-invariant system $(\Gamma, \omega, \Phi)$. 
By a \emph{position observable} $\chi$ for this 
system we understand a `procedure' which, given 
any spacelike hyperplane $\Sigma \in \mathsf{SpHP}$ 
in (affine) Minkowski spacetime, allows us to 
`localise' the system on $\Sigma$. More precisely, 
this means that for any $\Sigma \in \mathsf{SpHP}$, 
we have an $M$-valued phase space function
\begin{equation}
	\chi(\Sigma) \colon \Gamma \to M
\end{equation}
with image contained in $\Sigma$, whose value 
$\chi(\Sigma)(\gamma)$ for $\gamma \in \Gamma$ is 
to be interpreted as the `$\chi$-position' of our 
system in state $\gamma$ on the hyperplane $\Sigma$.

Any spacelike hyperplane $\Sigma \in \mathsf{SpHP}$ 
is uniquely characterised by its (timelike) 
future-directed unit normal $u \in V$ and its distance 
$\tau\in\mathbb R$ to the origin $o \in M$, measured 
along the straight line through $o$ in direction $u$.  
In terms of these, it has the form 
\begin{equation} \label{eq:hyperplane_params}
	\Sigma = \{x\in M : u_\mu x^\mu = -\tau\},
\end{equation}
where we identified $M$ with $V$. 
From now on, whenever convenient, we will identify 
$\Sigma$ with the tuple $(u, \tau)$. The condition 
that the image of $\chi(\Sigma)$ be contained 
in $\Sigma$ then takes the form
\begin{equation} \label{eq:pos_obs_image}
	u_\mu \chi^\mu(u, \tau)(\gamma) = - \tau.
\end{equation}
We can now also spell out explicitly the left action 
of $\mathcal{P}$ on $\mathsf{SpHP}$ that is induced 
from the left action of $\mathcal{P}$ on $M$
(as already mentioned below equation \eqref{eq:def_SpHp}):
\begin{equation} \label{eq:action_on_SpHP}	
	(\Lambda,a) \cdot (u,\tau) = (\Lambda u, \tau - \Lambda u \cdot a)
\end{equation}
One easily checks that this indeed defines a left action, i.e.\ 
$(\Lambda_1,a_1)\cdot [(\Lambda_2,a_2)\cdot(u,\tau)] = 
(\Lambda_1\Lambda_2,a_1+\Lambda_1a_2)\cdot(u,\tau)$.

Fixing $u$ and varying $\tau$ in 
\eqref{eq:hyperplane_params}, we obtain the spacelike 
hyperplanes corresponding to different `instants of 
time' $\tau$ in the Lorentz frame corresponding to 
$u$. Thus, for a fixed state $\gamma\in\Gamma$ and fixed 
frame $u$, the set
\begin{equation}
	\{\chi(u, \tau) (\gamma) : \tau \in \mathbb R\} \subset M
\end{equation}
gives the `worldline' of the $\chi$-position of 
the system. Following Fleming \cite{Fleming:1965a}, who 
says that this is a requirement `easily agreed upon', 
we require that this worldline should be parallel to 
the four-momentum\footnote
	{This assumption is natural for closed systems as we 
	consider here. For non-closed systems, i.e.\ systems 
	without local energy--momentum conservation, the 
	four-velocity is in general not parallel to the 
	four-momentum; see, e.g., the discussion at the beginning 
	of section 2.6 in \cite{Giulini:2018}.},
i.e.\ 
$\frac{\partial \chi(u, \tau)}{\partial \tau} \propto P$. 
Together with \eqref{eq:pos_obs_image}, this implies 
condition \eqref{eq:pos_obs_time_der} in the 
definition below, which is meant to sum up all the 
preceding considerations.
\begin{defn} \label{defn:pos_obs}
	A \emph{position observable} for a classical 
	Poincaré-invariant system $(\Gamma, \omega, \Phi)$ 
	with causal four-momentum is a map
	\begin{equation} \label{eq:pos_obs}
		\chi \colon \mathsf{SpHP} \times \Gamma \to M, \; (\Sigma, \gamma) \mapsto \chi(\Sigma)(\gamma)
	\end{equation}
	satisfying
	\begin{equation}
		\chi(\Sigma)(\gamma) \in \Sigma
	\end{equation}
	for all $\Sigma \in \mathsf{SpHP}$ and all 
	$\gamma \in \Gamma$ (or, equivalently, 
	\eqref{eq:pos_obs_image}), as well as
	\begin{equation} \label{eq:pos_obs_time_der}
		\frac{\partial \chi_\mu(u, \tau)}{\partial \tau} = \frac{1}{(- u \cdot P)} P_\mu \; .
	\end{equation}
	For fixed $\Sigma\in\mathsf{SpHP}$, we will often 
	view $\chi(\Sigma) \colon \Gamma \to M$ as a 
	phase space function in its own right.
\end{defn}
Note that \eqref{eq:pos_obs_time_der} and 
\eqref{eq:pos_obs_image} imply that the 
four-momentum must be causal for such a position 
observable to exist.

In addition to the demands of the positions 
$\chi(\Sigma)$ being located on $\Sigma$ and 
of `worldlines' in direction of the four-momentum, 
Fleming also introduces the following covariance 
requirement (which we, different to Fleming, do not 
include in the definition of a position observable):
\begin{defn} \label{defn:pos_obs_cov}
	A position observable for a classical 
	Poincaré-invariant system $(\Gamma, \omega, \Phi)$ 
	is said to be \emph{covariant} if and only if
	\begin{equation} \label{eq:pos_obs_cov_1}
		\chi \Big((\Lambda, a) \cdot \Sigma\Big) \Big(\Phi_{(\Lambda, a)}(\gamma)\Big) = (\Lambda, a) \cdot \Big( \chi(\Sigma)(\gamma) \Big)
	\end{equation}
	for all $\Sigma\in\mathsf{SpHP}$, $\gamma \in \Gamma$ 
	and $(\Lambda, a) \in \mathcal P$. This can be 
	read concisely as saying that the map 
	\eqref{eq:pos_obs} is invariant under 
	the natural left action induced from those on 
	the domain and target spaces (invariance of 
	$\chi$'s graph):
	\begin{equation} \label{eq:pos_obs_cov_2}
		\chi = (\Lambda,a) \circ \chi \circ \left( (\Lambda,a)^{-1} \times \Phi_{(\Lambda,a)^{-1}} \right).
	\end{equation}
\end{defn}

This is indeed a sensible notion of covariance: 
it demands that, for any Poincaré transformation 
$(\Lambda, a)$, the $\chi$-position of the 
transformed system $\Phi_{(\Lambda, a)}(\gamma)$ 
on the transformed hyperplane 
$(\Lambda, a) \cdot \Sigma$ be the transform 
of the `original position' $\chi(\Sigma)(\gamma)$. 
In terms of components, \eqref{eq:pos_obs_cov_1} 
assumes the form
\begin{equation}
	\chi^\mu (\Lambda u, \tau - \Lambda u\cdot a) \circ \Phi_{(\Lambda, a)} = \Lambda^\mu_{\hphantom{\mu}\nu} \chi^\nu(u, \tau) + a^\mu\,,
\end{equation}
taking into account \eqref{eq:action_on_SpHP}.

\subsection{Spin supplementary conditions}

The most important and widely used procedure to 
define special-relativistic position observables 
is by so-called \emph{spin supplementary conditions}. 
Suppose we are given a causal, future-directed 
vector $P \in V$ and an antisymmetric 2-tensor 
$J \in \bigwedge^2 V^*$, describing the four-momentum 
and the angular momentum (with respect to the origin 
$o \in M$) of some physical system. For any 
future-directed timelike vector $f \in V$, 
we then consider the equation
\begin{equation} \label{eq:SSC}
	0 = S_{\mu\nu} f^\nu
\end{equation}
with $S_{\mu\nu} := J_{\mu\nu} - x_\mu P_\nu + x_\nu P_\mu$, 
which we view as an equation for $x \in M$. Since 
$S$ is the angular momentum tensor with respect 
to the reference point $x$ (instead of the origin 
$o$ as for $J$), or the \emph{spin tensor} with 
respect to $x$, \eqref{eq:SSC} is called the 
\emph{spin supplementary condition} (SSC) with 
respect to $f$. As is well-known (and easily 
verified), the set of its solutions $x$ is a 
line in $M$ with tangent $P$, namely
\begin{equation} \label{eq:SSC_worldline}
	\{x \in M : 0 = S_{\mu\nu} f^\nu\} = \left\{x \in M : x_\mu = \frac{J_{\mu\rho} f^\rho}{f\cdot P} + \lambda P_\mu \; \text{with} \; \lambda \in \mathbb R\right\}.
\end{equation}
This line can be given the interpretation of the 
`centre of energy' worldline of our system with 
respect to the Lorentz frame defined by $f$. 
See \cite{Costa.EtAl:2018} and references therein 
for further discussion on the interpretation and 
impact of various SSCs as regards equations of motion 
in General Relativity.

The idea is now to explicitly combine the 
SSC-based approach with Fleming's geometric ideas, 
thereby introducing the two independent parameters 
$f$ from \eqref{eq:SSC_worldline} and $u$ from 
\eqref{eq:hyperplane_params}. We define a position
observable in the sense of definition \ref{defn:pos_obs} 
in the following way: given a classical 
Poincaré-invariant system $(\Gamma, \omega, \Phi)$ 
with causal four-momentum and a state 
$\gamma \in \Gamma$, we consider the SSC worldline
defined by \eqref{eq:SSC} where we now take
$P_\mu(\gamma)$ for the four-momentum and 
$J_{\mu\nu}(\gamma)$ for the angular momentum tensor. 
We then simply define $\chi(\Sigma)(\gamma)$ to be 
the intersection of this worldline with the 
hyperplane $\Sigma = (u,\tau)$. This means that we
take the $x(\lambda)$ from \eqref{eq:SSC_worldline}
and determine the parameter $\lambda$
from \eqref{eq:pos_obs_image}, i.e.\ from 
$x(\lambda) \cdot u + \tau = 0$. Inserting the 
$\lambda = \lambda(u,\tau)$ so determined leads to  
\begin{defn}
	The \emph{SSC position observable} with 
	respect to $f$ is given by
	\begin{equation} \label{eq:SSC_pos_obs}
		\chi_\mu(u, \tau) = \frac{J_{\mu\rho} f^\rho}{f\cdot P} 
		+ \frac{\tau P_\mu}{(- u \cdot P)} 
		- \frac{J_{\lambda\rho} u^\lambda f^\rho}{(- f \cdot P)}
		\, \frac{P_\mu}{(- u \cdot P)} \; .
	\end{equation}
	Let us again stress the interpretation of this 
	expression: it is the SSC position with respect to 
	$f$ (i.e.\ a point on the `centre of energy' worldline 
	with respect to $f$) as localised on the hyperplane 
	characterised by unit normal $u$ and distance $\tau$ 
	to the origin, i.e.\ as seen in the Lorentz frame 
	with respect to $u$ at `time' $\tau$.
\end{defn}

Note that for this definition to make sense, $f$ 
does not have to be a fixed timelike future-directed 
vector: it can depend on the normal $u$ (and could 
even depend on $\tau$), and it can also depend on 
phase space\footnote
	{Various choices for $f$ were given 
	distinguished names in the literature. 
	The main ones, different from the Newton--Wigner 
	condition to be discussed here, are as follows. 
	If $f$ is meant to just characterise a fixed 
	`laboratory frame', which may be preferred for 
	any reason, like rotational symmetries in that 
	frame, the SSC is named after 
	Corinaldesi \& Papapetrou 
	\cite{Corinaldesi.Papapetrou:1951}. If $f$ is 
	proportional to the total linear momentum of the 
	system, the SSC is named after Tulczyjew \cite{Tulczyjew:1959} and Dixon
	\cite{Dixon:1970}. If $f$ is chosen in a somewhat 
	self-referential way to be the four-velocity of 
	the worldline that is to be determined by the 
	very SSC containing that $f$, the condition is 
	named after Frenkel  \cite{Frenkel:1926}, 
	Mathisson \cite{Mathisson:1937a,Mathisson:1937b}, 
	and Pirani \cite{Pirani:1956a,Pirani:1956b}.}.
Of course this means that according to this 
dependence of $f$, we will possibly be considering 
different worldlines for different choices of $u$.
\begin{exmp}
	\begin{enumerate}[label=(\roman*)]
		\item Choosing $f = u$, we are considering, for 
			each $u$, the SSC worldline with respect to $u$, 
			i.e.\ the \emph{centre of energy} worldline\footnote
				{Note that it was called `centre of mass' by Fleming \cite{Fleming:1965a}.}
			with respect to $u$. Using \eqref{eq:SSC_pos_obs}, 
			the centre of energy position observable has the form
			\begin{equation}
				\chi^\mathrm{CE}_\mu(u, \tau) = \frac{J_{\mu\rho} u^\rho}{u\cdot P} + \frac{\tau P_\mu}{(- u \cdot P)} \; .
			\end{equation}

		\item In the case of timelike four-momentum, we 
			can choose $f = P$ the four-momentum (the 
			Tulczyjew--Dixon SSC), such that the corresponding 
			SSC worldline is the centre of energy worldline in 
			the momentum rest frame of the system. This worldline, 
			which is obviously independent of $u$, was called the 
			\emph{centre of inertia} worldline by Fleming 
			\cite{Fleming:1965a}. The centre of inertia has the form
			\begin{equation}
				\chi^\mathrm{CI}_\mu(u, \tau) = -\frac{J_{\mu\rho} P^\rho}{m^2 c^2} + \frac{\tau P_\mu}{(- u \cdot P)} - \frac{J_{\lambda\rho} u^\lambda P^\rho}{m^2 c^2}\; \frac{P_\mu}{(- u \cdot P)} \; ,
			\end{equation}
			where $m = \sqrt{-P^2}/c$ is the mass of the system.
		\item Choosing $f = u + \frac{P}{mc}$ where 
			$m = \sqrt{-P^2}/c$ is the mass of the system (again 
			only possible in the case of timelike four-momentum), 
			we obtain the \emph{Newton--Wigner position observable}. 
			Evaluating \eqref{eq:SSC_pos_obs}, it has the form
			\begin{equation} \label{eq:NW_pos}
				\chi_\mu^\mathrm{NW}(u, \tau) = -\frac{J_{\mu\rho} \left(u^\rho + \frac{P^\rho}{mc}\right)}{mc - u\cdot P} + \frac{\tau P_\mu}{(- u \cdot P)} - \frac{J_{\lambda\rho} u^\lambda P^\rho}{mc (mc - u \cdot P)}\, \frac{P_\mu}{(- u \cdot P)} \; .
			\end{equation}
	\end{enumerate}
\end{exmp}

Of course, the SSC position observable 
\eqref{eq:SSC_pos_obs} will generally not be 
covariant in the sense of definition 
\ref{defn:pos_obs_cov} unless $f$ is also assumed 
to transform appropriately. If $f$ depends 
on $\Sigma\in\mathsf{SpHP}$ and $\gamma\in\Gamma$ 
and takes values in $V$ it seems obvious that for the 
resulting position to be covariant $f$ itself 
must be a covariant function under the 
combined actions on its domain and target spaces.
Indeed, we have
\begin{prop} \label{prop:SSC_pos_obs_cov}
	If the vector $f$ defining the SSC position 
	observable $\chi$ is a function  
	\begin{equation}
		f \colon \mathsf{SpHP} \times \Gamma \to V, \quad (\Sigma,\gamma) \mapsto f(\Sigma)(\gamma),
	\end{equation}
	such that  
	\begin{equation} \label{eq:SSC_vec_cov}
		f \Big((\Lambda, a) \cdot \Sigma\Big) \Big(\Phi_{(\Lambda, a)}(\gamma)\Big) = \Lambda \cdot \Big( f(\Sigma)(\gamma) \Big)
	\end{equation}
	for all $\Sigma \in \mathsf{SpHP}$, 
	$\gamma \in \Gamma$, and $(\Lambda, a) \in \mathcal P$,
	then $\chi$ is a covariant position observable. 
	Again we note that, just like in the transition 
	from \eqref{eq:pos_obs_cov_1} to \eqref{eq:pos_obs_cov_2}, 
	we may rewrite \eqref{eq:SSC_vec_cov} equivalently 
	as expressing the invariance of $f$ (i.e.\ its graph) 
	under simultaneous actions on its domain and target 
	spaces (using that translations act trivially on the 
	target space $V$):
	\begin{equation}
		f = \Lambda \circ f \circ \left( (\Lambda,a)^{-1} \times \Phi_{(\Lambda,a)^{-1}} \right)
	\end{equation}

	\begin{proof}
		At first, suppose we are given a future-directed 
		timelike four-momentum $P \in V$ and an angular 
		momentum tensor $J \in \bigwedge^2 V^*$, as well 
		as a future-directed timelike vector $f$ for the 
		definition of an SSC. In addition, fix a Poincaré 
		transformation $(\Lambda, a) \in \mathcal P$. 
		If we now consider (a) the SSC worldline for $P$ 
		and $J$ with respect to $f$, and (b) the SSC 
		worldline for the transformed four-momentum 
		$P'=\Lambda P$ and angular momentum  
		$J' = ((\Lambda^{-1})^\top \otimes (\Lambda^{-1})^\top)J + a^\flat \wedge (\Lambda^{-1})^\top P^\flat$ 
		(compare \eqref{eq:coadjoint_rep_components_b}) 
		with respect to the transformed vector $\Lambda f$, 
		it is easy to check that the second worldline is 
		the Poincaré transform by $(\Lambda, a)$ of the 
		first. That is, by Poincaré transforming the 
		four-momentum and angular momentum of the system 
		as well as the `direction vector' for the SSC, we 
		Poincaré transform the SSC worldline.

		Now, the SSC position $\chi(\Sigma)(\gamma)$ is 
		defined to be the intersection of the hyperplane 
		$\Sigma$ with the SSC worldline of $\gamma$ with 
		respect to $f(\Sigma)(\gamma)$. Thus, the `new position'
		\begin{equation}
			\chi \Big((\Lambda, a) \cdot \Sigma\Big) \Big(\Phi_{(\Lambda, a)}(\gamma)\Big)
		\end{equation}
		is the intersection of the transformed hyperplane 
		$(\Lambda, a) \cdot \Sigma$ with the SSC worldline 
		of the transformed system $\Phi_{(\Lambda, a)}(\gamma)$ 
		with respect to the transformed vector 
		$\Lambda \cdot \Big( f(\Sigma)(\gamma) \Big)$, 
		where we used the covariance requirement 
		\eqref{eq:SSC_vec_cov}. But according to our earlier 
		considerations, this means that the `new position' 
		is the intersection of the transformed hyperplane 
		with the transform of the original SSC worldline~-- 
		i.e.\ the transform of the original position 
		$\chi(\Sigma)(\gamma)$. This means that the position 
		observable is covariant.
	\end{proof}
\end{prop}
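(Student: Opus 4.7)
The plan is to exploit the fact that $\chi(\Sigma)(\gamma)$ is defined as a geometric intersection: the SSC worldline \eqref{eq:SSC_worldline} determined by the triple $(P(\gamma), J(\gamma), f(\Sigma)(\gamma))$ meets the hyperplane $\Sigma$ in a single point. Since Poincaré transformations act on $M$ by affine isometries, they map hyperplanes to hyperplanes, straight lines to straight lines, and preserve intersections. Covariance of $\chi$ thus reduces to two separate covariance statements: that of $\Sigma$, which is built into the action \eqref{eq:action_on_SpHP} on $\mathsf{SpHP}$, and that of the SSC worldline, which is the actual content to be proved.

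For the latter I would first establish a purely kinematical lemma at the level of fixed data $(P, J, f)$: given a future-directed timelike $P \in V$, an antisymmetric $J \in \bigwedge^2 V^*$, a future-directed timelike $f \in V$, and $(\Lambda, a) \in \mathcal P$, the Poincaré image by $(\Lambda, a)$ of the SSC worldline \eqref{eq:SSC_worldline} for $(P, J, f)$ equals the SSC worldline for the transformed data $(P', J', f')$ with $P' = \Lambda P$, $f' = \Lambda f$, and $J'$ the co-adjoint transform whose components are given by \eqref{eq:coadjoint_rep_components_b}. This amounts to parameterising a point $x$ on the original worldline via \eqref{eq:SSC_worldline}, forming $(\Lambda, a) \cdot x$, and reorganising the result, using $\Lambda^\top \eta \Lambda = \eta$ and the identity $f' \cdot P' = f \cdot P$, back into the form \eqref{eq:SSC_worldline} with the primed data. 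The two extra terms in \eqref{eq:coadjoint_rep_components_b} are exactly what supply the affine shift by $a$ of the line.

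With the kinematical lemma in hand, the proposition follows by specialising the data to the phase space functions evaluated at $\gamma$ and combining two equivariance statements. The $\mathrm{Ad}^*$-equivariance \eqref{eq:coadjoint_rep_components} of the momentum map says precisely that $(P \circ \Phi_{(\Lambda, a)}, J \circ \Phi_{(\Lambda, a)})$ evaluated at $\gamma$ equals the $(P', J')$ built from $(P(\gamma), J(\gamma))$; the assumed covariance \eqref{eq:SSC_vec_cov} of $f$ says that $f((\Lambda, a) \cdot \Sigma)(\Phi_{(\Lambda, a)}(\gamma)) = \Lambda f(\Sigma)(\gamma) = f'$. Hence the SSC worldline used to define $\chi((\Lambda, a) \cdot \Sigma)(\Phi_{(\Lambda, a)}(\gamma))$ is the $(\Lambda, a)$-transform of the worldline used to define $\chi(\Sigma)(\gamma)$; intersecting with the transformed hyperplane $(\Lambda, a) \cdot \Sigma$ then yields the $(\Lambda, a)$-transform of $\chi(\Sigma)(\gamma)$, which is exactly \eqref{eq:pos_obs_cov_1}.

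The main obstacle is the kinematical lemma: while geometrically obvious, one has to verify carefully that the inhomogeneous piece of the co-adjoint action on $J$ in \eqref{eq:coadjoint_rep_components_b} combines with the Lorentz rotation of $f$ to reproduce precisely the affine translation by $a$ of the line, and not some spurious re-parameterisation mixed with a shift along the $P$-direction. Once this book-keeping is in order, the remainder of the proof is a transparent assembly of equivariance statements and requires no further calculation.
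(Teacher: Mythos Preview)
Your proposal is correct and follows essentially the same route as the paper's proof: first establish the kinematical lemma that the SSC worldline for the Poincar\'e-transformed data $(P',J',\Lambda f)$ is the $(\Lambda,a)$-image of the original worldline, then combine this with the covariance of $f$ and the transformation of the hyperplane to conclude. The paper leaves the kinematical lemma at the level of ``it is easy to check'' and speaks directly of ``the transformed system $\Phi_{(\Lambda,a)}(\gamma)$'', whereas you make the intermediate invocation of the $\mathrm{Ad}^*$-equivariance \eqref{eq:coadjoint_rep_components} of the momentum map explicit; this is a clarification rather than a different argument.
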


Since the vectors defining the centre of energy, 
the centre of inertia and the Newton--Wigner position 
satisfy \eqref{eq:SSC_vec_cov}, all of these are 
covariant position observables. We stress once 
more that for this to be true we need to take into 
account the action of the Poincaré group on 
$\mathsf{SpHP}$. This remark is particularly 
relevant in the Newton--Wigner case, in which $f$ 
is the sum of two vectors, $u$ and $P/(mc)$, the 
first being associated to an element of 
$\mathsf{SpHP}$ and the second to an element of 
$\Gamma$. Covariance cannot be expected to hold 
for non-trivial actions on $\Gamma$ alone. In the 
next section we will offer an insight as to why 
this somewhat `hybrid' combination for $f$ in 
terms of an `external' vector $u$ and an 
`internal' vector $P/(mc)$ appears. The latter is 
internal, or dynamical, in the sense that it is defined 
entirely by the physical state of the system, i.e.\ 
a point in $\Gamma$, while the former is external, or 
kinematical, in the sense that it refers to the choice 
of $\Sigma\in\mathsf{SpHP}$, which is entirely 
independent of the physical system and its state.

Finally, we will need the following well-known 
result for SSCs with respect to different vectors 
$f$, which was first shown by M{\o}ller in 1949 in 
\cite{Moeller:1949}; see also 
\cite[theorem 17]{Giulini:2015} for a recent 
and more geometric discussion:
\begin{thm}[Møller disc and radius] \label{thm:Moller_disc}
	Suppose we are given the future-directed timelike 
	four-momentum vector $P \in V$ and the angular momentum 
	tensor $J \in \bigwedge^2 V^*$ of some physical system. 
	Consider the bundle of all possible SSC worldlines 
	\eqref{eq:SSC_worldline} for this system, defined by 
	considering all future-directed timelike vectors $f$. 
	The intersection of this bundle with any hyperplane 
	$\Sigma \in \mathsf{SpHP}$ orthogonal to $P$ is a 
	two-dimensional disc (the so-called \emph{M\o ller disc}) 
	in the plane orthogonal to the Pauli--Lubański vector 
	$W = (*(P^\flat \wedge J))^\sharp$, whose centre is the 
	centre of inertia on $\Sigma$ and whose radius is the 
	\emph{Møller radius}
	\begin{equation}
		R_M = \frac{S}{mc} \; ,
	\end{equation}
	where $S = \sqrt{W^2}/(mc)$ is the spin of the system 
	and $m = \sqrt{-P^2}/c$ its mass.
\end{thm}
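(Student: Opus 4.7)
The plan is to reduce the statement to an elementary rest-frame computation. Choose an orthonormal basis with $e_0 = P/(mc)$, so that $P^\mu = mc\,\delta^\mu_{\;0}$; any $\Sigma \in \mathsf{SpHP}$ orthogonal to $P$ then has future-directed unit normal $u = P/(mc)$ and is a constant-$x^0$ slice in this frame. Since every SSC worldline \eqref{eq:SSC_worldline} has tangent $P$, it meets such a $\Sigma$ transversally in a unique point, so the task reduces to describing the locus of these intersection points as $f$ ranges over the future-directed timelike cone.

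First I would exploit the scaling invariance of \eqref{eq:SSC_worldline} under $f \mapsto \kappa f$ to write a representative $f$ as $f = e_0 + v$ with $v$ spatial and $|v|<1$ (the lightlike boundary being recovered in the limit $|v|\to 1$). Substituting into \eqref{eq:SSC_worldline} and reading off spatial components on $\Sigma$, the intersection point splits as a $v$-independent term, which I identify as the centre of inertia (the case $v=0$, i.e.\ $f=P$), plus a correction linear in $v$ and proportional to $J_{ab}v^b/(mc)$.

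Next I would rewrite this correction in three-vector language. A direct rest-frame evaluation of $W_\mu = -\tfrac12\varepsilon_{\mu\nu\rho\sigma}P^\nu J^{\rho\sigma}$ gives $W_0 = 0$ and $W^a = \tfrac{mc}{2}\,{^{(3)}\varepsilon}^{abc} J_{bc}$; introducing the rest-frame spin three-vector $\mathbf{S}$ with components $S^a := W^a/(mc)$, this is equivalent to $J_{ab} = {^{(3)}\varepsilon}_{abc} S^c$, and $|\mathbf{S}| = \sqrt{W^\mu W_\mu}/(mc) = S$ matches the scalar spin of the theorem. The displacement from the centre of inertia then reads $-(v \times \mathbf{S})/(mc)$: manifestly orthogonal to $\mathbf{S}$ (equivalently, to $W$), and of magnitude strictly less than $|\mathbf{S}|/(mc) = S/(mc) = R_M$. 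This pins the intersection locus inside the open disc of radius $R_M$ in the plane through the centre of inertia orthogonal to $W$.

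Finally I would verify surjectivity onto that disc. For any $d$ in the perpendicular plane with $|d|<R_M$, the equation $v \times \mathbf{S} = -mc\,d$ is solved by $v = -mc\,(\mathbf{S} \times d)/|\mathbf{S}|^2$ via the standard vector-triple-product identity together with $d \perp \mathbf{S}$, yielding $|v| = mc|d|/|\mathbf{S}| < 1$, so the corresponding $f = e_0 + v$ is admissible and the full open disc is attained; the lightlike limit $|v| \to 1$ recovers the boundary. The only genuine obstacle I foresee is the sign-and-factor bookkeeping in the rest-frame dictionary between $W$, $J_{ab}$, and $\mathbf{S}$, which must be traced carefully through the sign conventions of section \ref{sec:PauliLubanski}; the geometric substance reduces to the elementary observation that $v \mapsto v \times \mathbf{S}$ maps the open unit ball in the spatial slice onto the open disc of radius $|\mathbf{S}|$ in the plane perpendicular to $\mathbf{S}$.
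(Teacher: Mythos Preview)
Your argument is correct. Note, however, that the paper does \emph{not} supply its own proof of this theorem: it is stated as a well-known result and attributed to M{\o}ller's 1949 paper, with a pointer to a more geometric treatment in \cite[theorem~17]{Giulini:2015}. So there is no in-paper proof to compare against.

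That said, your rest-frame reduction is the standard route and matches in spirit what one finds in the cited references: normalise $f = e_0 + v$ with $|v|<1$, read off from \eqref{eq:SSC_worldline} that the spatial displacement from the centre of inertia is $-J_{ab}v^b/(mc)$, identify $J_{ab} = {^{(3)}\varepsilon}_{abc} S^c$ via the rest-frame Pauli--Luba\'nski relation of section~\ref{sec:PauliLubanski}, and conclude that the displacement equals $-(v\times\mathbf S)/(mc)$. Your surjectivity step via $v = -mc\,(\mathbf S\times d)/|\mathbf S|^2$ is clean. Two small remarks: first, as you implicitly note, future-directed \emph{timelike} $f$ yields only the open disc, with the boundary circle arising as the lightlike limit~-- the theorem's use of ``disc'' should be read accordingly; second, in the degenerate case $W=0$ the map $v\mapsto v\times\mathbf S$ collapses identically and the ``disc'' reduces to the single point given by the centre of inertia, consistent with $R_M=0$.
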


\subsection{The centre of spin condition}

For a system with timelike four-momentum, the 
Pauli--Lubański vector $W$ has the interpretation 
of being ($mc$ times) the spin vector in the momentum 
rest frame. We now define the spin vector in an 
arbitrary Lorentz frame by boosting $W/(mc)$ to the 
new frame:
\begin{defn}
	Given the timelike four-momentum $P \in V$ and the 
	Pauli--Lubański vector $W \in P^\perp$ of a 
	physical system, its \emph{spin vector} in the 
	Lorentz frame given by the future-directed unit 
	timelike vector $u$ is
	\begin{equation}
		s(u) := B(u) \cdot \frac{W}{mc} \; ,
	\end{equation}
	where $B(u) \in \mathcal L_+^\uparrow$ is the 
	unique Lorentz boost with respect to $\frac{P}{mc}$ 
	(i.e.\ containing $\frac{P}{mc}$ in its timelike 
	2-plane of action) that maps $\frac{P}{mc}$ to 
	$u$, with $m = \sqrt{-P^2}/c$ being the mass. In terms 
	of components, this boost is given by\footnote
		{Generally, given two unit 
		timelike future-pointing vectors $n_1$ and $n_2$,
		then the boost that maps $n_1$ onto $n_2$ and 
		fixes the spacelike plane orthogonal to 
		$\mathrm{span}\{n_1,n_2\}$ is given by the 
		combination $\rho_{n_1+n_2}\circ\rho_{n_1}$ of two 
		hyperplane-reflections, where 
		$\rho_n := \mathrm{id}_V - 2\frac{n\otimes n^\flat}{n^2}$ 
		is the reflection at the hyperplane orthogonal to 
		$n$. Setting $n_1 = P/(mc)$ and $n_2 = u$ gives 
		\eqref{eq:linking_boost_explicit}.}
	\begin{equation} \label{eq:linking_boost_explicit}
		B^\mu_{\hphantom{\mu}\nu}(u)
		= \delta^\mu_\nu + \frac{\left(\frac{P^\mu}{mc} 
		+ u^\mu\right) \left(\frac{P_\nu}{mc} 
		+ u_\nu\right)}{1 - u \cdot \frac{P}{mc}} 
		- 2 \frac{u^\mu P_\nu}{mc} \; .
	\end{equation}
\end{defn}

\begin{defn}
	A \emph{centre of spin} position observable for 
	a 	classical Poincaré-invariant system 
	$(\Gamma, \omega, \Phi)$ with timelike four-momentum 
	is a position observable $\chi$ satisfying
	\begin{equation}
		s_\mu(u) = - \frac{1}{2} \varepsilon_{\mu\nu\rho\sigma} u^\nu S^{\rho\sigma}(u),
	\end{equation}
	where $S_{\rho\sigma}(u) := J_{\mu\nu} - \chi_\mu(u, \tau) P_\nu + \chi_\nu(u, \tau) P_\mu$ 
	is the spin tensor\footnote
		{Since $\frac{\partial\chi(u,\tau)}{\partial\tau}$ is 
		proportional to $P$, the spin tensor is independent 
		of $\tau$.}
	with respect to $\chi$. Expressed in terms of the 
	Hodge operator, this condition reads
	\begin{equation}
		s(u) = (*(u^\flat \wedge S(u)))^\sharp.
	\end{equation}
\end{defn}
With respect to an orthonormal basis 
$\{u = e_0, \dots, e_3\}$ adapted to $u$, the 
centre of spin condition takes the form
\begin{equation}
	s_0(u) = 0, \quad
	s_a(u) = - \frac{1}{2} \varepsilon_{a0\rho\sigma} S^{\rho\sigma}(u) 
		= \frac{1}{2} {^{(3)}\varepsilon}_{abc} S^{bc}(u),
\end{equation}
through which it acquires an immediate interpretation: 
a position observable is a centre of spin if and only 
if, for any Lorentz frame $u$, the spin vector 
defined by boosting the Pauli--Lubański vector 
to $u$ really generates spatial rotations around 
the point given by the position observable.

We will now rewrite the centre of spin condition. 
Since $S(u) = J - (\chi(u,\tau))^\flat \wedge P^\flat$, 
we can rewrite the Pauli--Lubański vector as 
$W = \left[* \left(\frac{P^\flat}{mc} \wedge J \right) \right]^\sharp = \left[* \left(\frac{P^\flat}{mc} \wedge S(u) \right) \right]^\sharp$. 
Thus, the centre of spin condition takes the form
\begin{equation}
	(B(u)^{-1})^\top \left[*\left(\frac{P^\flat}{mc} \wedge S(u)\right)\right] = *(u^\flat \wedge S(u)).
\end{equation}
Since $B(u)$ is a Lorentz transformation, i.e.\ an 
isometry of $(V,\eta)$, and it maps $P/(mc)$ to $u$, 
this is equivalent to
\begin{equation}
	u^\flat \wedge \left((B(u)^{-1})^\top \otimes (B(u)^{-1})^\top\right) (S(u)) = u^\flat \wedge S(u).
\end{equation}
Using the explicit form \eqref{eq:linking_boost_explicit} 
of $B(u)$, we see that
\begin{equation}
	\left((B(u)^{-1})^\top \otimes (B(u)^{-1})^\top\right) (S(u)) = S(u) + \frac{\frac{P^\flat}{mc} \wedge \left(\iota_{u + \frac{P}{mc}} S(u)\right)}{1 - u \cdot \frac{P}{mc}} + u^\flat \wedge (\ldots).
\end{equation}
Thus, we have the following:
\begin{lemqed}
	The centre of spin condition is equivalent to
	\begin{equation} \label{eq:centre_of_spin_equiv_1}
		u^\flat \wedge P^\flat \wedge \left(\iota_{u + \frac{P}{mc}} S(u)\right) = 0. \qedhere
	\end{equation}
\end{lemqed}
Since the Newton--Wigner position observable is defined 
by the SSC $\iota_{u + \frac{P}{mc}} S(u) = 0$, the 
preceding result immediately implies
\begin{thmqed}
	The Newton--Wigner position observable $\chi^\mathrm{NW}$ 
	is a centre of spin.
\end{thmqed}

Further rewriting the centre of spin condition, we see 
that \eqref{eq:centre_of_spin_equiv_1} is equivalent to
\begin{equation}
	\iota_{u + \frac{P}{mc}} S(u) \in \mathrm{span} \{u^\flat, P^\flat\}.
\end{equation}
Due to the antisymmetry of $S(u)$, this is equivalent to
\begin{equation} \label{eq:centre_of_spin_equiv_2}
	\iota_{u + \frac{P}{mc}} S(u) \in \mathrm{span} \left\{u^\flat - \frac{P^\flat}{mc} \right\}.
\end{equation}
Using this, we can show:
\begin{lem} \label{lem:centre_of_spin_diff_NW}
	$\chi$ is a centre of spin $\iff$ 
	$\chi(u, \tau) - \chi^\mathrm{NW}(u, \tau) \in \mathrm{span}\{u, P\}$.

	\begin{proof}
		Writing $D := \chi(u, \tau) - \chi^\mathrm{NW}(u,\tau)$, 
		the spin tensor of $\chi$ may be expressed as $S(u) = S^\mathrm{NW}(u) - D^\flat \wedge P^\flat$. 
		Thus, \eqref{eq:centre_of_spin_equiv_2} is equivalent to
		\begin{equation} \label{eq:centre_of_spin_equiv_3}
			\iota_{u + \frac{P}{mc}} (D^\flat \wedge P^\flat) \in \mathrm{span} \left\{u^\flat - \frac{P^\flat}{mc} \right\}.
		\end{equation}
		We have $\iota_{u + \frac{P}{mc}} (D^\flat \wedge P^\flat) = (D\cdot u + \frac{D\cdot P}{mc}) P^\flat - (P\cdot u - mc) D^\flat$, 
		and thus \eqref{eq:centre_of_spin_equiv_3} implies that 
		for all $v \in u^\perp \cap P^\perp$, we have
		\begin{equation}
			v\cdot D = 0.
		\end{equation}
		But this means $D \in (u^\perp \cap P^\perp)^\perp = \mathrm{span}\{u, P\}$.

		Conversely, if $D \in \mathrm{span}\{u, P\}$, we 
		have $\iota_{u + \frac{P}{mc}} (D^\flat \wedge P^\flat) \in \mathrm{span} \left\{ \iota_{u + \frac{P}{mc}} (u^\flat \wedge P^\flat) \right\}$. 
		But now
		\begin{equation}
			\iota_{u + \frac{P}{mc}} (u^\flat \wedge P^\flat) = \left(-1 + u\cdot \frac{P}{mc}\right) P^\flat - (u\cdot P - mc) u^\flat = (mc - u\cdot P) \left(u^\flat - \frac{P^\flat}{mc}\right),
		\end{equation}
		and thus we have \eqref{eq:centre_of_spin_equiv_3}, 
		i.e.\ $\chi$ is a centre of spin.
	\end{proof}
\end{lem}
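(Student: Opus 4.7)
The plan is to reduce the centre of spin condition for $\chi$ to a condition on the displacement $D := \chi(u,\tau) - \chi^\mathrm{NW}(u,\tau)$, using the already-established equivalent form \eqref{eq:centre_of_spin_equiv_2} together with the fact that the Newton--Wigner observable by definition satisfies the SSC $\iota_{u + P/(mc)} S^\mathrm{NW}(u) = 0$. First I would observe that shifting the reference point of the spin tensor by $D$ gives
\begin{equation}
	S(u) = S^\mathrm{NW}(u) - D^\flat \wedge P^\flat,
\end{equation}
which follows directly from the definition $S_{\mu\nu} = J_{\mu\nu} - \chi_\mu P_\nu + \chi_\nu P_\mu$. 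Subtracting the vanishing $\iota_{u + P/(mc)} S^\mathrm{NW}(u)$ from \eqref{eq:centre_of_spin_equiv_2}, the centre of spin condition becomes precisely \eqref{eq:centre_of_spin_equiv_3}, which is the form already displayed in the target lemma.

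Next I would unpack the interior product using the Leibniz rule $\iota_X(\alpha \wedge \beta) = \alpha(X)\beta - \beta(X)\alpha$, obtaining
\begin{equation}
	\iota_{u + \frac{P}{mc}}(D^\flat \wedge P^\flat)
	= \left(D\cdot u + \tfrac{D\cdot P}{mc}\right) P^\flat - (u \cdot P - mc)\, D^\flat,
\end{equation}
where I use $P \cdot P = -m^2c^2$. This is the identity on which both implications pivot.

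For the forward direction, I would pair the right-hand side with an arbitrary vector $v \in u^\perp \cap P^\perp$. Since $u^\flat - P^\flat/(mc)$ annihilates every such $v$, the centre of spin condition forces the pairing of the displayed expression with $v$ to vanish; the $P^\flat$-term drops out automatically, leaving $(mc - u \cdot P)(D \cdot v) = 0$. As both $u$ and $P$ are future-directed timelike, $u \cdot P < 0$, so the prefactor is strictly positive; hence $D \cdot v = 0$ for all $v \in u^\perp \cap P^\perp$, which yields $D \in (u^\perp \cap P^\perp)^\perp = \mathrm{span}\{u, P\}$.

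The converse is essentially a direct computation: if $D \in \mathrm{span}\{u,P\}$, then the $P$-component of $D$ disappears in the wedge $D^\flat \wedge P^\flat$, so this 2-form is proportional to $u^\flat \wedge P^\flat$, and it suffices to evaluate
\begin{equation}
	\iota_{u + \frac{P}{mc}}(u^\flat \wedge P^\flat)
	= (mc - u \cdot P)\left(u^\flat - \tfrac{P^\flat}{mc}\right),
\end{equation}
which manifestly lies in $\mathrm{span}\{u^\flat - P^\flat/(mc)\}$. The only subtle point, and thus the likely main obstacle, is the careful bookkeeping of signs in the mostly-plus signature (in particular keeping track that $P \cdot P = -m^2 c^2$ and $u\cdot P < 0$) together with the Leibniz-rule manipulation, so that the prefactor $(mc - u \cdot P)$ is seen to be strictly positive and can be divided out.
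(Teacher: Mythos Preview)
Your proposal is correct and follows essentially the same route as the paper's own proof: the same reduction via $S(u) = S^\mathrm{NW}(u) - D^\flat \wedge P^\flat$ and the Newton--Wigner SSC, the same explicit interior-product computation, and the same pairing with $v \in u^\perp \cap P^\perp$ for the forward direction and evaluation of $\iota_{u+P/(mc)}(u^\flat\wedge P^\flat)$ for the converse. Your explicit remark that $mc - u\cdot P > 0$ (because $u\cdot P < 0$) is a small improvement in rigour over the paper, which leaves this nonvanishing implicit.
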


We can now prove the main result of this section.
\begin{thm} \label{thm:NW_SSC_centre_of_spin}
	The Newton--Wigner position observable $\chi^\mathrm{NW}$ 
	is the only centre of spin position observable that is 
	continuous and defined by an SSC.
	
	\begin{proof}
		Let $\chi$ be an SSC position observable. Writing 
		$D(u,\tau) := \chi(u, \tau) - \chi^\mathrm{NW}(u,\tau)$, 
		we know by the M\o ller disc theorem (theorem 
		\ref{thm:Moller_disc}) that the projection of 
		$D(u, \tau)$ orthogonal to $P$ is orthogonal to the 
		Pauli--Lubański vector $W$. Thus, since $P$ itself 
		is orthogonal to $W$, we have
		\begin{equation} \label{eq:NW_SSC_unique_proof_orth}
			D(u, \tau) \perp W
		\end{equation}
		for any $(u, \tau) \in \mathsf{SpHP}$. In addition, 
		we know that $D(u, \tau) \perp u$; in particular, 
		$D(u, \tau)$ is spacelike for any 
		$(u, \tau) \in \mathsf{SpHP}$.

		Now suppose that $\chi$ is a centre of spin. By lemma 
		\ref{lem:centre_of_spin_diff_NW} this means that
		\begin{equation}
			D(u, \tau) \in \mathrm{span}\{u, P\}
		\end{equation}
		for all $(u, \tau) \in \mathsf{SpHP}$. Using 
		\eqref{eq:NW_SSC_unique_proof_orth} and $P \perp W$, 
		we conclude that
		\begin{equation}
			\text{for all} \; u \; \text{with} \; u\cdot W \ne 0 : D(u, \tau) \in \mathrm{span}\{P\}.
		\end{equation}
		Since $D(u, \tau)$ has to be spacelike, we thus have shown
		\begin{equation}
			D(u, \tau) = 0 \; \text{for all} \; u \; \text{with} \; u\cdot W \ne 0.
		\end{equation}
		If $W \ne 0$, the set of future-directed unit timelike 
		$u$ satisfying $u \cdot W \ne 0$ is dense in the 
		hyperboloid of all possible $u$, and thus assuming 
		continuity of $\chi$, we conclude that $D(u, \tau) = 0$ 
		for all $u$, finishing the proof.

		If $W = 0$, then by the M\o ller disc theorem all 
		SSC worldlines coincide, and thus we also have 
		$\chi = \chi^\mathrm{NW}$.
	\end{proof}
\end{thm}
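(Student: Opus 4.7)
The plan is to assume $\chi$ is an SSC position observable that is also a centre of spin, introduce the difference $D(u,\tau) := \chi(u,\tau) - \chi^\mathrm{NW}(u,\tau)$, and show pointwise on phase space that $D \equiv 0$. The argument will combine two essentially independent constraints on $D$: one coming from the fact that both $\chi$ and $\chi^\mathrm{NW}$ are defined by SSCs, so the M\o ller disc theorem restricts their difference, and one coming from the characterisation of centres of spin provided by lemma \ref{lem:centre_of_spin_diff_NW}.

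First I would exploit the M\o ller disc theorem \ref{thm:Moller_disc}: both $\chi(u,\tau)$ and $\chi^\mathrm{NW}(u,\tau)$ lie inside a disc contained in the plane orthogonal to the Pauli--Lubański vector $W$, so the component of $D$ orthogonal to $P$ is orthogonal to $W$, and since $P \perp W$ holds identically this gives $D \perp W$. Moreover, both positions lie in the hyperplane $\Sigma = (u,\tau)$, so $D \perp u$ and in particular $D$ is spacelike. Next, lemma \ref{lem:centre_of_spin_diff_NW} forces $D(u,\tau) \in \mathrm{span}\{u, P\}$. Writing $D = \alpha u + \beta P$ and computing $0 = D \cdot W = \alpha (u \cdot W)$ using $P \perp W$, we see that whenever $u \cdot W \ne 0$ we must have $\alpha = 0$; then $D$ is proportional to the timelike vector $P$, but $D$ is spacelike, forcing $D = 0$.

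The final step is to extend $D = 0$ from $\{u : u \cdot W \ne 0\}$ to every unit future-directed timelike $u$, and this is the one place where continuity of $\chi$ is actually used and where some care is required. Fixing a state $\gamma \in \Gamma$, either $W(\gamma) \ne 0$, in which case $\{u : u \cdot W(\gamma) \ne 0\}$ is open and dense in the hyperboloid of unit future-directed timelike vectors, and continuity of $\chi(\cdot,\tau)(\gamma)$ in $u$ propagates $D = 0$ to all $u$; or else $W(\gamma) = 0$, in which case the M\o ller radius $R_M = S/(mc)$ vanishes, the M\o ller disc collapses to the centre-of-inertia point, and every SSC worldline coincides with $\chi^\mathrm{NW}$. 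The main obstacle is not computational but conceptual: one has to recognise that the two constraints together nearly pin down $D$ pointwise, and that the residual freedom is closed off by continuity in the generic case and by the degeneracy of the M\o ller disc in the non-generic case $W = 0$.
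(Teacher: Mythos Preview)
Your proposal is correct and follows essentially the same route as the paper's proof: combine the M{\o}ller disc constraint $D \perp W$ with the centre-of-spin characterisation $D \in \mathrm{span}\{u,P\}$ from lemma~\ref{lem:centre_of_spin_diff_NW}, conclude $D=0$ whenever $u\cdot W \ne 0$ since $D$ must be spacelike, then close by continuity for $W\ne 0$ and by collapse of the M{\o}ller disc for $W=0$. Your explicit decomposition $D = \alpha u + \beta P$ makes the linear-algebra step slightly more transparent than the paper's phrasing, but the argument is otherwise identical.
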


Looking back into the various steps of the proofs 
it is interesting to note how the 
`extrinsic--intrinsic' combination $u+P/(mc)$ 
for $f$ came about. It entered through 
the unique boost transformation 
\eqref{eq:linking_boost_explicit} that 
was needed in order to transform an intrinsic 
quantity to an externally specified rest frame. 
The intrinsic quantity is the spin vector 
in the momentum rest frame, i.e.\ the
Pauli--Lubański vector, which is a function 
of $\Gamma$ only, and the externally specified 
frame is defined by $u$, which is independent 
of $\Gamma$ and determined through the choice of 
$\Sigma\in\mathsf{SpHP}$.

\section{A Newton--Wigner theorem for classical elementary systems}
\label{sec:NW_theorem}

For elementary Poincaré-invariant quantum systems 
--~i.e.\ quantum systems with an irreducible unitary 
action of the Poincaré group~-- the Newton--Wigner 
position operator is uniquely characterised by 
transforming `as a position should' under translations, 
rotations and time reversal, having commuting components 
and satisfying a regularity condition. This has been 
well-known since the original publication by Newton 
and Wigner \cite{Newton.Wigner:1949}. As advertised in 
the introduction, we shall now prove an analogous 
statement for classical systems.

For the whole of this section, we fix a future-directed 
unit timelike vector $u$ defining a Lorentz frame, 
and an adapted positively oriented orthonormal basis 
$\{u = e_0, \dots, e_3\}$. Unless otherwise stated, 
phrases such as `temporal', `spatial' and the like 
refer to the preferred time direction given by $u$. 
We will raise and lower spatial indices by the Euclidean 
metric $\delta$ induced by the Minkowski metric $\eta$ 
on the orthogonal complement of $u$; the components of 
$\delta$ in the adapted basis are simply given by the 
usual Kronecker delta. We denote the spatial volume form 
by $^{(3)}\varepsilon = \iota_u\varepsilon$.

We will employ a `three-vector' notation for spatial 
vectors, for example writing $\vec A = (A^a)$. We then use 
the usual three-vector notations for the Euclidean scalar 
product $\vec A \cdot \vec B = A_a B^a$, the Euclidean 
norm $|\vec A| := \sqrt{\vec A^2}$ and the vector product 
$(\vec A \times \vec B)_a = {^{(3)}\varepsilon_{abc}} A^b B^c$.

\subsection{Classical elementary systems}

In the quantum case, an elementary system is given by a 
Hilbert space with an \emph{irreducible} unitary action 
of the Poincaré group~-- i.e.\ each state of the system 
is connected to any other by a Poincaré transformation. 
In direct analogy, we define the notion of a classical 
elementary system:
\begin{defn}
	A \emph{classical elementary system} is a classical 
	Poincaré-invariant system $(\Gamma, \omega, \Phi)$, 
	where $\Phi$ is a transitive action of the proper 
	orthochronous Poincaré group $\mathcal P_+^\uparrow$.
\end{defn}
Note the we only assumed an action of the identity 
connected component of the Poincaré group, whereas  
Arens in \cite{Arens:1971b} considered the whole 
Poincaré group. In the classical context, simple 
transitivity replaces irreducibility in the quantum 
case.  

Arens classified the classical elementary systems\footnote
	{In fact, Arens classified what he called 
	\emph{one-particle} elementary systems (systems 
	that admit a map from $\Gamma$ to the set of lines 
	in Minkowski space which is equivariant with respect 
	to a certain subgroup of $\mathcal P_+^\uparrow$). 
	However, he also proved that this `one-particle' 
	condition is fulfilled for an elementary system 
	if and only if the four-momentum is not zero.}
in \cite{Arens:1971b}; the classification proceeds 
in terms of the system's four-momentum and 
Pauli--Lubański vector (similar to the Wigner
classification in the quantum case \cite{Wigner:1939}). We are only 
interested in the case of timelike four-momentum. 
For this case, the phase space can be explicitly 
constructed as follows:

\begin{thm}[Phase space of a classical elementary system] \label{thm:phase_space}
	Any classical elementary system with timelike 
	four-momentum is equivalent (in the sense of a 
	symplectic isomorphism respecting the action of 
	$\mathcal P_+^\uparrow$) to precisely one of the 
	following two cases:
	\begin{enumerate}[label=(\roman*)]
		\item (Spin zero, one parameter $m \in \mathbb R_+$)
			\begin{itemize}
				\item Phase space $\Gamma = T^*\mathbb R^3$ with 
					coordinates $(\vec x, \vec p)$, symplectic 
					form $\omega = \D x^a \wedge \D p_a$
				\item Poincaré generators (i.e.\ component functions of
					the momentum map):
					\begin{subequations}
					\begin{align}
						\text{spatial translations} \quad P_a &= p_a \\
						\text{time translation} \quad P_0 &= -\sqrt{m^2 c^2 + \vec p^2} \\
						\text{rotations} \quad J_{ab} &= x_a p_b - x_b p_a \\
						\text{boosts} \quad J_{a0} &= P_0 x_a
					\end{align}
					\end{subequations}
			\end{itemize}

		\item (Spin non-zero, two parameters $m, S \in \mathbb R_+$)
			\begin{itemize}
				\item Phase space $\Gamma = T^*\mathbb R^3 \times \mathsf S^2$ 
					with coordinates $(\vec x, \vec p)$ for $T^*\mathbb R^3$, 
					symplectic form $\omega = \D x^a \wedge \D p_a + S \cdot \D\Omega^2$ 
					where $\D\Omega^2$ is the standard volume form 
					on $\mathsf S^2$. We denote the phase space function 
					projecting onto the second factor $\mathsf S^2$ by 
					$\hat s \colon \Gamma \to \mathsf S^2 \subset\mathbb R^3$. 
					The \emph{spin vector} observable is the 
					$\mathsf S^2_S$-valued phase space function 
					$\vec s := S \cdot \hat s$; its components satisfy 
					the Poisson bracket relations
					\begin{equation}
						\{s_a, s_b\} = {^{(3)}\varepsilon_{abc}} s^c.
					\end{equation}
					Here $\mathsf S^2_S \subset \mathbb R^3$ denotes 
					the 2-sphere of radius $S$ in $\mathbb R^3$.
				\item Poincaré generators (i.e.\ component functions of
					the momentum map):
					\begin{subequations}
					\begin{align}
						\text{spatial translations} \quad P_a &= p_a \\
						\text{time translation} \quad P_0 &= -\sqrt{m^2 c^2 + \vec p^2} \\
						\text{rotations} \quad J_{ab} &= x_a p_b - x_b p_a + {^{(3)}\varepsilon_{abc}} s^c \\
						\text{boosts} \quad J_{a0} &= P_0 x_a - \frac{(\vec p \times \vec s)_a}{mc - P_0}
					\end{align}
					\end{subequations}
			\end{itemize}
	\end{enumerate}
\end{thm}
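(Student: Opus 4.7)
The plan is to realise every such classical elementary system as (a cover of) a coadjoint orbit of $\mathcal{P}_+^\uparrow$ via the momentum map, and then identify the orbit explicitly. By transitivity of $\Phi$ and the $\mathrm{Ad}^*$-equivariance of $f$ (equation \eqref{eq:momentum_map_Ad-star_equivariance}), the image $f(\Gamma)$ lies in a single coadjoint orbit $\mathcal{O} \subset \mathfrak{p}^*$. The differential $\D f_\gamma$ satisfies $\langle \D f_\gamma(X), \xi\rangle = \omega_\gamma(V_\xi, X)$, so its kernel equals the $\omega$-orthogonal of the orbit tangent space at $\gamma$; transitivity together with non-degeneracy of $\omega$ then force $\D f_\gamma$ to be injective. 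Hence $f$ is a local symplectomorphism from $\Gamma$ onto $\mathcal O$, equipped with its Kirillov--Kostant--Souriau form, and a dimension comparison upgrades this to a covering map.

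Next I would enumerate the relevant coadjoint orbits. For future-directed timelike $P$, the two Casimirs $P^2 = -m^2 c^2$ and $W^2 = m^2 c^2 S^2$ together with the sign of $P^0$ label the orbits by $(m,S)$ with $m>0$ and $S\ge 0$. A convenient reference point has $P = (-mc,\vec 0)$, $J_{a0}=0$, and $J_{ab} = {^{(3)}\varepsilon}_{abc}\, s^c$ for a fixed $\vec s$ of length $S$. Reading off \eqref{eq:coadjoint_rep_components}, its stabiliser in $\mathcal{P}_+^\uparrow$ is the product of time translations with spatial rotations fixing $\vec s$, namely $\mathbb R \times \mathrm{SO}(3)$ when $S=0$ and $\mathbb R\times \mathrm{SO}(2)$ when $S>0$, yielding orbit dimensions $6$ and $8$ respectively.

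Explicit global coordinates on the orbit then match the listed models. The spatial momentum $\vec p$ is read directly off $P_a$; applying a Lorentz boost along $\vec p$ to the reference configuration and tracking its effect on $J$ through \eqref{eq:coadjoint_rep_components_b}, one sees the remaining degrees of freedom decompose naturally into a spatial vector $\vec x\in\mathbb R^3$ and, when $S>0$, a unit vector $\hat s\in\mathsf S^2$ (with spin vector $\vec s = S\hat s$). Inverting this parametrisation reproduces exactly the formulas for $J_{ab}$ and $J_{a0}$ stated in the theorem, and exhibits the orbit topology as $T^*\mathbb R^3$ or $T^*\mathbb R^3 \times \mathsf S^2$. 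Both model spaces being simply connected, the covering $f$ has a single sheet and is thus a genuine symplectomorphism, not merely a cover.

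The main obstacle is then the verification that the induced symplectic form takes the advertised expression $\omega = \D x^a\wedge\D p_a$ (with an additional $S\cdot \D\Omega^2$ piece in the spin-non-zero case), or equivalently that the ten component functions listed in the theorem obey the Poisson brackets \eqref{eq:Poinc_gen}. This is a direct but fairly lengthy computation; the bracket $\{J_{a0},J_{b0}\}$ in the spin-non-zero case is the most delicate, as the $(\vec p \times \vec s)/(mc - P_0)$ term has to conspire with the $\mathsf S^2$-brackets $\{s_a,s_b\} = {^{(3)}\varepsilon}_{abc}\, s^c$ to reproduce the required $-J_{ab}$.
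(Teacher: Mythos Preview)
Your sketch is essentially correct, but it is worth noting that the paper does not actually prove this theorem in any detail. Instead, the paper invokes Arens' classification of classical elementary systems (which itself proceeds via coadjoint orbits) as a black box, and then simply asserts that the two listed models are elementary systems~--~i.e.\ that the $\mathcal P_+^\uparrow$-action on them is transitive~--~so that by Arens' uniqueness they must be the representatives of their respective $(m,S)$-classes. The explicit presentation is attributed to Bacry.

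Your approach is therefore more self-contained than the paper's: rather than citing the classification wholesale, you rederive it by showing that the momentum map realises $\Gamma$ as a cover of a coadjoint orbit, and then parametrise that orbit directly. A couple of small points are worth tightening. First, your ``dimension comparison'' for upgrading the immersion to a covering is really an equivariance argument: since $\mathcal P_+^\uparrow$ acts transitively on both $\Gamma$ and $\mathcal O$, the differential $\D f_\gamma$ maps the fundamental vector fields $V_\xi(\gamma)$ (which span $T_\gamma\Gamma$) onto the fundamental vector fields of the coadjoint action (which span $T_{f(\gamma)}\mathcal O$), so $\D f_\gamma$ is surjective as well as injective. Second, the identification $f^*\omega_{\mathrm{KKS}} = \omega$ is being used implicitly when you call $f$ a local symplectomorphism; this follows from the Poisson-action property \eqref{eq:Poinc_gen_hom}, and it would be worth saying so.

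In the end both routes leave the same residual task: one must verify either that the ten listed functions satisfy the Poincar\'e brackets (your route), or equivalently that the $\mathcal P_+^\uparrow$-action on the listed models is transitive with the correct Casimir values (the paper's route). You correctly flag the $\{J_{a0},J_{b0}\}$ bracket in the spinning case as the one nontrivial check.
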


Note that in fact the explicit construction of the systems 
in \cite{Arens:1971b} as co-adjoint orbits of $\mathcal P_+^\uparrow$ 
is quite different in appearance to the forms given above. 
However, one can show that the above systems are indeed 
elementary systems (i.e.\ that the action of $\mathcal P_+^\uparrow$ 
is transitive), and thus due to Arens' uniqueness result 
they are possible representatives of their respective 
classes. We will use the forms given above, which were 
anticipated by Bacry in \cite{Bacry:1967}, since they will 
be easier to explicitly work with. To unify notation, we 
let $S = 0, \vec s := 0$ in the case of zero-spin systems. 
Furthermore, we introduce the open subset of phase space 
$\Gamma^* := \Gamma \setminus \{|\vec P| = 0\}$ and the 
$\mathsf S^2$-valued function $\hat P := \frac{\vec P}{|\vec P|}$ 
on $\Gamma^*$.

Using the explicit form of the systems given in theorem 
\ref{thm:phase_space}, one directly checks:
\begin{lemqed} \label{lem:mom_spin_CIV}
	For a classical elementary system with timelike 
	four-momentum, the functions $P_a, \hat P \cdot \vec s$ 
	(or just the $P_a$ in the case of zero spin) form a 
	complete involutive set on $\Gamma^*$ (or the whole of 
	$\Gamma$ in the case of zero spin).
\end{lemqed}

The behaviour of the momentum and spin vectors under translations 
and rotations is also easily obtained:
\begin{lem} \label{lem:mom_spin_trafo}
	For a classical elementary system with timelike 
	four-momentum, $\vec P$ and $\vec s$ are invariant under 
	translations and `transform as vectors' under spatial 
	rotations, i.e.\ we have
	\begin{equation}
		\{P_a, V_b\} = 0, \quad 
		\{J_{ab}, V_c\} = \delta_{ac} V_b - \delta_{bc} V_a 
		\quad \text{for}\quad 
		\vec V = \vec P, \vec s.
	\end{equation}
	\begin{proof}
		For $\vec P$, these are part of the Poincaré algebra relations and thus true by definition. For $\vec s$, they are easily confirmed using the explicit form of the Poincaré generators.
	\end{proof}
\end{lem}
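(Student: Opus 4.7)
The plan is to handle the $\vec{P}$ and $\vec{s}$ cases separately, using the explicit realisations of the system given in theorem \ref{thm:phase_space}.

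For $\vec{P}$, both identities are nothing but the Poisson-bracket form of the Poincaré commutation relations \eqref{eq:Poinc_gen} restricted to spatial indices: $\{P_a, P_b\} = 0$ is part of the abelian translation subalgebra, while $\{J_{ab}, P_c\} = \eta_{ac} P_b - \eta_{bc} P_a$ reduces to $\delta_{ac} P_b - \delta_{bc} P_a$ since all indices are spatial. Hence nothing needs to be checked beyond quoting the structural identities already established for the momentum map.

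For $\vec{s}$, the spin-zero case is vacuous, so I focus on $S > 0$. The key observation is the product structure $\Gamma = T^*\mathbb{R}^3 \times \mathsf{S}^2$ together with the fact that the symplectic form $\omega = \D x^a \wedge \D p_a + S\, \D\Omega^2$ splits along this product. Consequently, any Poisson bracket between a function pulled back from $T^*\mathbb{R}^3$ and one pulled back from $\mathsf{S}^2$ vanishes. Since $P_a = p_a$ depends only on the first factor while $s_b$ depends only on the second, this immediately gives $\{P_a, s_b\} = 0$. For $\{J_{ab}, s_c\}$, the orbital contribution $x_a p_b - x_b p_a$ to $J_{ab}$ Poisson-commutes with $s_c$ for the same reason, leaving
\begin{equation*}
\{J_{ab}, s_c\} = {^{(3)}\varepsilon_{abd}}\,\{s^d, s_c\}.
\end{equation*}
Inserting the Lie-algebra bracket $\{s_a, s_b\} = {^{(3)}\varepsilon_{abe}}\,s^e$ stated in theorem \ref{thm:phase_space} and applying the standard contraction identity ${^{(3)}\varepsilon_{abd}}\,{^{(3)}\varepsilon_{dce}} = \delta_{ac}\delta_{be} - \delta_{ae}\delta_{bc}$ yields exactly $\delta_{ac} s_b - \delta_{bc} s_a$.

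The only genuine piece of bookkeeping is the $\varepsilon\varepsilon$-contraction and keeping the signs and raised/lowered index positions straight; everything else is a direct quotation of the Poincaré algebra relations together with the product decomposition of $(\Gamma,\omega)$, so I do not expect any real obstacle.
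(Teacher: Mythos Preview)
Your proof is correct and follows essentially the same approach as the paper's own argument: for $\vec P$ you quote the Poincaré relations, and for $\vec s$ you use the explicit generators from theorem~\ref{thm:phase_space}. The paper's proof merely states that the spin case is ``easily confirmed using the explicit form of the Poincaré generators''; you have simply spelled out that confirmation in full, including the $\varepsilon\varepsilon$-contraction, which is exactly what one would do to fill in the details.
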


For our considerations, we will need to know how the 
\emph{time reversal} operation with respect to the 
hyperplane in $M$ through the origin $o \in M$ and 
orthogonal to $u = e_0$ is implemented on phase 
space. In order to get this right, we recall that 
the incorporation of time reversal in the context 
of Special Relativity corresponds, by its very 
definition, to a particular upward $\mathbb{Z}_2$ 
extension\footnote
	{Here we are using the terminology of 
	\cite[p.\,xx]{Conway.EtAl:AOFG}, according to which 
	a group $G$ with normal subgroup $A$ and quotient 
	$G/A \cong B$ is either called an 
	\emph{upward extension} of $A$ by $B$ or a 
	\emph{downward extension} of $B$ by $A$.}
of $\mathcal{P}_+^\uparrow$, i.e.\ the formation 
of a new group called 
$\mathcal{P}_+^\uparrow\cup\mathcal{P}_-^\downarrow$ 
of which $\mathcal{P}_+^\uparrow$ is a normal 
subgroup with $(\mathcal{P}_+^\uparrow\cup\mathcal{P}_-^\downarrow)/\mathcal{P}_+^\uparrow\cong\mathbb{Z}_2$. 
It is the particular nature of this extension that 
eventually defines what is meant by 
\emph{time reversal}: it consists in the requirement 
that the outer automorphism induced by the only 
non-trivial element of $\mathbb{Z}_2$ on the Lie 
algebra $\mathfrak{p}$ of $\mathcal{P}_+^\uparrow$ 
shall be the one which reverses the sign of spatial 
translations and rotations and leaves invariant 
boosts and time translations; see, e.g., 
\cite{Bacry.LevyLeblond:1968}. Implementing time 
reversal on phase space then means to extend the 
action of $\mathcal{P}_+^\uparrow$ to an action 
of $\mathcal{P}_+^\uparrow\cup\mathcal{P}_-^\downarrow$.

Now, according to this scheme, we can immediately 
write down how our particular time reversal transformation 
on phase space, $T_u\colon\Gamma\to\Gamma$, acts on the 
Poincaré generators, i.e.\ the component functions of the 
momentum map: 
\begin{equation}
	P_a \circ T_u = - P_a \; , \quad 
	J_{ab} \circ T_u = - J_{ab} \; , \quad 
	J_{a0} \circ T_u = J_{a0} \; ,  \quad 
	P_0 \circ T_u = P_0
\end{equation}
From this the well-known result follows 
that time reversal (as defined above) necessarily 
corresponds to an \emph{anti}-symplectomorphism. 
Hence, in the process of extending our symplectic 
action of $\mathcal{P}_+^\uparrow$ on $\Gamma$ to an action 
of $\mathcal{P}_+^\uparrow\cup\mathcal{P}_-^\downarrow$ 
satisfying the time reversal criterion above, we had to 
generalise to possibly \emph{anti}-symplectomorphic 
actions. This is akin to the situation in Quantum 
Mechanics, where, as is well-known, time reversal 
necessarily corresponds to an \emph{anti}-unitary 
transformation.

It is now clear how time reversal is implemented 
in the case at hand:
\begin{lemqed} \label{lem:time_rev}
	For an elementary system as in theorem
	\ref{thm:phase_space}, time reversal with respect 
	to the hyperplane through the origin and orthogonal 
	to $u = e_0$ is given by
	\begin{equation}
			T_u \colon (\vec x, \vec p, \hat s) \mapsto (\vec x, - \vec p, - \hat s). \qedhere
		\end{equation}
\end{lemqed}
Unless otherwise stated, in the following we will 
always mean time reversal with respect to the 
hyperplane through the origin and orthogonal to 
$u = e_0$ when saying `time reversal'.

\subsection{Statement and interpretation of the Newton--Wigner theorem}

The classical Newton--Wigner theorem we are going to 
prove can be formulated very similar to the quantum case:
\begin{thm}[Classical Newton--Wigner theorem] \label{thm:NW}
	For a classical elementary system with timelike 
	four-momentum, there is a unique $\mathbb R^3$-valued 
	phase space function $\vec X$ that
	\begin{enumerate}[label=(\roman*)]
		\item is $C^1$,
		\item has Poisson-commuting components,
		\item \label{ass:NW_transl} satisfies the canonical 
			Poisson relations $\{X^a, P_b\} = \delta^a_b$ 
			with the generators of spatial translations with 
			respect to $u = e_0$,
		\item \label{ass:NW_rot} transforms `as a (position) 
			vector' under spatial rotations with respect to 
			$u = e_0$, i.e.\ satisfies $\{J_{ab}, X^c\} = \delta_a^c X_b - \delta_b^c X_a$, and
		\item is invariant under time reversal with respect to 
			the hyperplane through the origin and orthogonal to 
			$u = e_0$, i.e.\ satisfies $\vec X \circ T_u = \vec X$.
	\end{enumerate}

	In terms of the Poincaré generators, it is given by
	\begin{equation} \label{eq:NW_pos_thm}
		X_a = -\frac{J_{a0}}{mc} - \frac{J_{ab} P^b}{mc (mc - P_0)} - \frac{J_{b0} P^b}{P_0 mc (mc - P_0)} P_a \; ,
	\end{equation}
	where $m = \sqrt{P_0^2 - \vec P^2}/c$ is the mass of the system.
\end{thm}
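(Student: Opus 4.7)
The plan is to prove existence by a direct computation in the Arens--Bacry canonical coordinates from theorem \ref{thm:phase_space}, and uniqueness by analysing the discrepancy $\vec Y := \vec X - \vec X^{\mathrm{NW}}$. For existence, I would substitute the explicit generators $J_{a0}$, $J_{ab}$ (using $P_0^2 - |\vec p|^2 = m^2 c^2$) into \eqref{eq:NW_pos_thm}; a straightforward simplification, in which the spin pieces coming from $J_{a0}$ and from $J_{ab} P^b$ cancel and the $\vec x$-terms collapse, yields the striking identity $X_a^{\mathrm{NW}} = x_a$. In these coordinates axioms (i)--(iv) are then manifest, and (v) is immediate from lemma \ref{lem:time_rev}.

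For uniqueness, suppose $\vec X$ satisfies all five axioms and set $\vec Y := \vec X - \vec x$. Since $\vec x$ itself also satisfies (iii)--(v), the analogous linearised versions hold for $\vec Y$: $\{Y^a, P_b\} = 0$, $\vec Y$ transforms as a vector under rotations, and $\vec Y \circ T_u = \vec Y$. The non-linear axiom (ii) translates into
\begin{equation*}
	\frac{\partial Y^b}{\partial p_a} - \frac{\partial Y^a}{\partial p_b} + \{Y^a, Y^b\}_{\mathsf S^2} = 0,
\end{equation*}
where only the $\mathsf S^2$-part of the Poisson bracket contributes, since $\{Y^a, P_b\} = 0$ already forces $\vec Y$ to be $\vec x$-independent. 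Rotational equivariance, together with the constancy $|\vec s|^2 = S^2$, then pins $\vec Y$ down to the form $\vec Y = f_1 \vec p + f_2 \vec s + f_3\,\vec p \times \vec s$ with the $f_i$ scalar functions of the invariants $|\vec p|^2$ and $\vec p \cdot \vec s$. Under time reversal (lemma \ref{lem:time_rev}) these invariants and the cross product $\vec p \times \vec s$ are preserved while $\vec p$ and $\vec s$ flip sign, forcing $f_1 = f_2 = 0$; in the spin-zero case no $\vec s$-terms appear at all, and this step already yields $\vec Y = 0$.

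The main obstacle is the spin-nonzero case, where $\vec Y = f_3(|\vec p|^2, \vec p \cdot \vec s)\,\vec p \times \vec s$ and the compatibility equation must be exploited. Expanding the momentum-curl and the $\mathsf S^2$-bracket using the usual $\varepsilon$-tensor identities, both sides become explicit vectors in $\mathbb R^3$ whose components are linear combinations of $\vec p$ and $\vec s$ with coefficients built from $f_3$ and its partial derivatives; at generic points $\vec p$ and $\vec s$ are linearly independent, so the vector equation splits into two scalar equations. The $\vec s$-component yields the linear first-order PDE
\begin{equation*}
	2\,|\vec p|^2\,\frac{\partial f_3}{\partial(|\vec p|^2)} + (\vec p\cdot\vec s)\,\frac{\partial f_3}{\partial(\vec p\cdot\vec s)} + 2 f_3 = 0,
\end{equation*}
whose general solution along characteristics is the scaling form $f_3 = F(\vec p\cdot\vec s/|\vec p|)/|\vec p|^2$ for some single-variable $F$. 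The $C^1$ regularity at $\vec p = 0$ then kills $F$: along the ray $\vec p = t \hat n$, $\vec Y$ behaves like $F(\hat n \cdot \vec s)\,(\hat n \times \vec s)/t$ as $t \to 0^+$, which blows up unless $F$ vanishes identically on the range of $\hat n \cdot \vec s$. Thus $\vec Y = 0$ and $\vec X = \vec X^{\mathrm{NW}}$.
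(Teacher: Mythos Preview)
Your argument is correct and complete; the overall architecture (verify that $\vec X^{\mathrm{NW}}=\vec x$ in the Arens--Bacry coordinates, then show the difference $\vec Y=\vec X-\vec x$ must vanish using translation/rotation covariance, time reversal, Poisson-commutativity, and $C^1$ regularity at $\vec p=0$) is the same as the paper's. The difference lies in how Poisson-commutativity is exploited.

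The paper, following Jordan's quantum proof, does not expand $\{X^a,X^b\}=0$ directly. Instead it combines $\{X^a,X^b\}=0$ with $\{X^a,P_b\}=\delta^a_b$ to obtain $\{X^a,\vec x\cdot\vec P\}=X^a$, hence $\{d^a,\vec x\cdot\vec P\}=d^a$. Since $\{\,\cdot\,,\vec x\cdot\vec P\}=-|\vec P|\,\partial/\partial|\vec P|$ on functions of $(\vec P,\vec s)$, this is a linear Euler equation for $\vec d$, forcing the two coefficients $B,C$ in the decomposition $\vec d=B\,\hat P\times\vec s+C\,\hat P\times(\hat P\times\vec s)$ to scale as $1/|\vec P|$, whence regularity gives $\vec d=0$. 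Time reversal is used only once, to establish $\vec d\cdot\vec P=0$.

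You instead use time reversal more aggressively to kill two of the three coefficients in $\vec Y=f_1\vec p+f_2\vec s+f_3\,\vec p\times\vec s$ up front, and then expand $\{X^a,X^b\}=0$ directly. The crucial (and correct) observation you rely on is that the nonlinear contributions $\propto f_3^2$ and $\propto f_3\,\partial_v f_3$ from $\{Y^a,Y^b\}_{\mathsf S^2}$ land entirely in the $\vec p$-component of the dualised vector equation, so the $\vec s$-component is genuinely the linear PDE you wrote. Your characteristic solution $f_3=F(\vec p\cdot\vec s/|\vec p|)/|\vec p|^2$ and the blow-up argument along $\vec p=t\hat n$ then finish the job.

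What each approach buys: the paper's dilation trick $\{d^a,\vec x\cdot\vec P\}=d^a$ sidesteps any explicit $\varepsilon$-gymnastics and never risks nonlinear terms, but needs two unknown coefficients. Your route reduces to a single unknown earlier via time reversal, at the cost of having to check (as you implicitly assert) that the quadratic pieces of the sphere bracket do not contaminate the $\vec s$-component; a reader would appreciate one line making that cancellation explicit. Both routes terminate with the same $1/|\vec p|$-type scaling and the same regularity argument.
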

Before proving the theorem in the next section, 
we will now discuss the interpretation of the 
`position' $\vec X$ it characterises. We want to 
interpret the value of $\vec X$ (in some state 
$\gamma \in \Gamma$) as the spatial components 
of a point in Minkowski spacetime $M$. Since $\vec X$ 
is invariant under time reversal with respect to 
the hyperplane through the origin and orthogonal 
to $u = e_0$, it can be interpreted as defining a 
point on this hyperplane. Thus, if we want to use 
the phase space function from the Newton--Wigner 
theorem to define a position observable $\chi$ in 
the sense of section \ref{sec:pos_obs}, we should 
set (in our basis adapted to $u$)
\begin{equation} \label{eq:pos_obs_def_by_NW_thm}
	\chi^a(u, \tau = 0) := X^a \; , \quad
	\chi^0(u, \tau = 0) := 0.
\end{equation}
The transformation behaviour of $\vec X$ under spatial 
translations and rotations (i.e.\ assumptions 
\ref{ass:NW_transl} and \ref{ass:NW_rot} of theorem 
\ref{thm:NW}) will then ensure that the position 
observable $\chi$ be covariant (in the sense of definition 
\ref{defn:pos_obs_cov}) regarding these transformations.

In fact, comparing \eqref{eq:NW_pos_thm} to the expression 
\eqref{eq:NW_pos} for the Newton--Wigner position 
observable $\chi^\mathrm{NW}$, we see that we have 
(in our adapted basis)
\begin{equation}
	\chi^{\mathrm{NW},a}(u, \tau = 0) = X^a \; , \quad
	\chi^{\mathrm{NW},0}(u, \tau = 0) = 0\colon
\end{equation}
the position $\vec X$ characterised by theorem 
\ref{thm:NW} is the one given by the Newton--Wigner 
position observable $\chi^\mathrm{NW}$ on the 
hyperplane $(u, 0) \in \mathsf{SpHP}$ (which is a 
covariant position observable due to proposition 
\ref{prop:SSC_pos_obs_cov}). Let us also remark that 
since any position observable's dependence on $\tau$ 
is fixed by \eqref{eq:pos_obs_time_der}, a position 
observable satisfying \eqref{eq:pos_obs_def_by_NW_thm} 
is equal to the Newton--Wigner observable 
$\chi^\mathrm{NW}$ on the whole family of hyperplanes 
$\Sigma \in \mathsf{SpHP}$ with normal vector $u$.

Combining this identification with the observation that 
we can freely choose the origin $o\in M$, we can restate 
the Newton--Wigner theorem in the following form:
\begin{thmqed}[Classical Newton--Wigner theorem, version 2] \label{thm:NW2}
	For a classical elementary system with timelike 
	four-momentum, given any hyperplane $\Sigma = (u, \tau) \in \mathsf{SpHP}$, 
	there is a unique $\Sigma$-valued phase space function 
	$\chi^\mathrm{NW}(\Sigma)$ that
	\begin{subequations}
	\begin{enumerate}[label=(\roman*)]
		\item is $C^1$,
		\item has Poisson-commuting components, i.e.
			\begin{equation}
				\left\{ \chi^{\mathrm{NW},\mu}(\Sigma), \chi^{\mathrm{NW},\nu}(\Sigma) \right\} = 0,
			\end{equation}
		\item satisfies the canonical Poisson relations with 
			the generators of spatial translations with respect 
			to $u$, i.e.
			\begin{equation}
				v_\mu w^\nu \left\{ \chi^{\mathrm{NW},\mu}(\Sigma), P_\nu \right\} = v\cdot w \; \text{for} \; v, w \in u^\perp,
			\end{equation}
		\item transforms `as a position' under spatial rotations 
			with respect to $u$, i.e.\ satisfies
			\begin{equation}
				v^\mu \tilde v^\nu w_\rho \left\{ J_{\mu\nu}, \chi^{\mathrm{NW},\rho}(\Sigma) \right\} = v^\mu \tilde v^\nu w_\rho \left[\delta_\mu^\rho \chi^\mathrm{NW}_\nu(\Sigma) - \delta_\nu^\rho \chi^\mathrm{NW}_\mu(\Sigma)\right] \; \text{for} \; v, \tilde v, w \in u^\perp,
			\end{equation}
			and
		\item is invariant under time reversal with respect 
			to $\Sigma$.
	\end{enumerate}
	\end{subequations}

	These $\chi^\mathrm{NW}(\Sigma)$ together form the 
	Newton--Wigner observable as given by \eqref{eq:NW_pos}.
\end{thmqed}

\subsection{Proof of the Newton--Wigner theorem}

\begin{proof}[Proof of theorem \ref{thm:NW}]
For the whole of the proof, we will work with the 
explicit form of the phase space of our elementary 
system given in theorem \ref{thm:phase_space}. It is 
easily verified that in this explicit form, $\vec x$ 
(i.e.\ the coordinate of the base point in 
$T^*\mathbb R^3$) is a phase space function with the 
properties demanded for $\vec X$. Thus we need to 
prove uniqueness. Our proof will follow the proof of 
the quantum-mechanical Newton--Wigner theorem given 
by Jordan in \cite{Jordan:1980}, some parts of which 
can be applied literally to the classical case.

We will several times need the following.
\begin{lem} \label{lem:invar_function}
	Consider a classical elementary system with timelike 
	four-momentum, with phase space $\Gamma$, and some 
	open subset $\tilde\Gamma$ of 
	$\Gamma^* = \Gamma \setminus \{|\vec P| = 0\}$. 
	Let $f$ be an $\mathbb R$-valued $C^1$ function 
	defined on $\tilde\Gamma$ that is invariant under 
	spatial translations and rotations, i.e.\ 
	$\{P_a, f\} = 0 = \{J_{ab}, f\}$. Then $f$ is a function of 
	$|\vec P|, \hat P \cdot \vec s$. \footnote
		{By `$f$ is a function of $|\vec P|, \hat P \cdot \vec s$ ' 
		we mean that $f$ depends on phase space only via 
		$|\vec P|, \hat P \cdot \vec s$, i.e.\ that there 
		is a $C^1$ function $F\colon U \to \mathbb R$, 
		$U = \left\{(|\vec P|(\gamma), (\hat P \cdot \vec s)(\gamma)) : \gamma \in \tilde\Gamma\right\} \subset \mathbb R_+ \times [-S,S]$ 
		satisfying
		\[f(\gamma) = F(|\vec P|(\gamma), (\hat P \cdot \vec s)(\gamma)) \; \text{for all} \; \gamma \in \tilde\Gamma.\]}

	\begin{proof}
		$f$ Poisson-commutes with $\vec P$ and $J_{ab}$. 
		Therefore it also Poisson-commutes with $\vec P$ 
		and $\frac{1}{2} {^{(3)}\varepsilon^{abc}} \hat P_a J_{bc} = \hat P \cdot \vec s$. 
		Now $\vec P, \hat P \cdot \vec s$ form a complete 
		involutive set on $\Gamma^*$ (lemma 
		\ref{lem:mom_spin_CIV}), so since $f$ Poisson-commutes 
		with them, it must be a function of 
		$\vec P, \hat P \cdot \vec s$. Since $f$ and 
		$\hat P \cdot \vec s$ are rotation invariant (by 
		lemma \ref{lem:mom_spin_trafo}), $f$ must be a 
		function of $|\vec P|, \hat P \cdot \vec s$.
	\end{proof}
\end{lem}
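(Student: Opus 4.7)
The plan is to reduce the claim to lemma~\ref{lem:mom_spin_CIV}, which asserts that $(\vec P, \hat P \cdot \vec s)$ is a complete involutive set on $\Gamma^*$. By the standard fact that a $C^1$ function Poisson-commuting with every member of a complete involutive set must be functionally dependent on that set, it will suffice to show that $f$ Poisson-commutes with $\hat P \cdot \vec s$ (commutation with $\vec P$ is in the hypothesis), and then to use rotation invariance to cut the $\vec P$-dependence down to one in $|\vec P|$ alone.

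The key step is therefore $\{f, \hat P \cdot \vec s\} = 0$. Since $\{f, P_a\} = 0 = \{f, J_{ab}\}$ by hypothesis, the Leibniz property of the Poisson bracket reduces this to expressing $\hat P \cdot \vec s$ as a smooth function of the spatial generators $P_a, J_{ab}$ on $\Gamma^*$ (where $\hat P$ is well-defined). Using the explicit form of $J_{ab}$ from theorem~\ref{thm:phase_space}, together with the fact that the orbital angular momentum $\vec x \times \vec p$ is orthogonal to $\vec p$ and hence to $\hat P$, one computes $\tfrac{1}{2}\,{}^{(3)}\varepsilon^{abc}\,\hat P_a J_{bc} = \hat P \cdot \vec s$, which provides precisely the required representation.

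Applying lemma~\ref{lem:mom_spin_CIV} then yields $f = F(\vec P, \hat P \cdot \vec s)$ for some $C^1$ function $F$. Invoking lemma~\ref{lem:mom_spin_trafo}, both $f$ and $\hat P \cdot \vec s$ are rotation-invariant scalars built from the rotation-vectors $\vec P$ and $\vec s$, while the individual components of $\vec P$ are not invariant. Hence $F(R\vec P, \hat P \cdot \vec s) = F(\vec P, \hat P \cdot \vec s)$ for every $R \in \mathsf{SO}(3)$, forcing the $\vec P$-dependence of $F$ to reduce to a dependence on $|\vec P|$ only, as claimed. In the zero-spin case the whole argument collapses since $\vec s \equiv 0$, and one reads off directly that $f$ is a function of $|\vec P|$.

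The only point calling for a little care is the invocation of the `standard fact' about complete involutive sets: strictly speaking this is a local statement, justified on the open dense subset of $\tilde\Gamma$ where the differentials of $\vec P$ and $\hat P\cdot\vec s$ are pointwise independent by noting that $f$ is constant on the Hamiltonian orbits of these functions, which there foliate $\tilde\Gamma$ by the joint level sets. Extending from this open dense subset to all of $\tilde\Gamma$ uses the $C^1$ regularity of $f$. Everything else in the argument is purely algebraic and I anticipate no further obstacle.
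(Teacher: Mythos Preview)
Your proposal is correct and follows essentially the same route as the paper: establish $\{f,\hat P\cdot\vec s\}=0$ via the identity $\tfrac{1}{2}\,{}^{(3)}\varepsilon^{abc}\hat P_a J_{bc}=\hat P\cdot\vec s$, invoke the completeness of $(\vec P,\hat P\cdot\vec s)$ from lemma~\ref{lem:mom_spin_CIV}, and then reduce the $\vec P$-dependence to $|\vec P|$ by rotation invariance. Your additional remarks on the local nature of the complete-involutive-set argument and on the spin-zero case are welcome elaborations but do not depart from the paper's strategy.
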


Let now $\vec X$ be an observable as in the 
statement of theorem \ref{thm:NW}, and consider 
the difference $\vec d := \vec X - \vec x$. Due to 
the assumptions of theorem \ref{thm:NW}, $\vec d$ 
is $C^1$, is invariant under translations 
(i.e.\ $\{d^a, P_b\} = 0$), transforms as a vector under 
spatial rotations (i.e.\ $\{J_{ab}, d^c\} = \delta_a^c d_b - \delta_b^c d_a$) 
and is invariant under time reversal with respect to 
the hyperplane through the origin and orthogonal to 
$u$ (i.e.\ $\vec d \circ T_u = \vec d$).

\begin{lem} \label{lem:invar_vector_no_mom_comp}
	Let $\vec A$ be a $\mathbb R^3$-valued $C^1$ phase 
	space function on a classical elementary system with 
	timelike four-momentum that is invariant under 
	translations, transforms as a vector under spatial 
	rotations and is invariant under time reversal. Then 
	$\vec A \cdot \vec P = 0$.

	\begin{proof}
		Since $\vec P$ is invariant under translations and 
		a vector under rotations, $\vec A \cdot \vec P$ is 
		invariant under translations and rotations. By lemma 
		\ref{lem:invar_function}, $\left. \vec A \cdot \vec P \right|_{\Gamma^*}$ 
		is a function of $|\vec P|, \hat P \cdot \vec s$. 
		This means we have
		\begin{equation} \label{eq:pf_lem_scalar_product_mom_1}
			\left. \vec A \cdot \vec P \right|_{\Gamma^*} = F(|\vec P|, \hat P \cdot \vec s)
		\end{equation}
		for some function $F\colon \mathbb R_+ \times [-S,S] \to \mathbb R$.

		Now considering time reversal $T_u$, on the one hand 
		we have (using lemma \ref{lem:time_rev})
		\begin{subequations}
		\begin{equation}
			|\vec P| \circ T_u = |\vec P \circ T_u| = |-\vec P| = |\vec P|
		\end{equation}
		and
		\begin{align}
			(\hat P \cdot \vec s) \circ T_u &= \left(\frac{1}{2} {^{(3)}\varepsilon^{abc}} \hat P_a J_{bc}\right) \circ T_u \nonumber\\
			&= \frac{1}{2} {^{(3)}\varepsilon^{abc}} (\hat P_a \circ T_u) (J_{bc} \circ T_u) \nonumber\\
			&= \frac{1}{2} {^{(3)}\varepsilon^{abc}} (-\hat P_a) (-J_{bc}) \nonumber\\
			&= \frac{1}{2} {^{(3)}\varepsilon^{abc}} \hat P_a J_{bc} \nonumber\\
			&= \hat P \cdot \vec s,
		\end{align}
		\end{subequations}
		implying
		\begin{equation} \label{eq:pf_lem_scalar_product_mom_2}
			F(|\vec P|, \hat P \cdot \vec s) \circ T_u = F(|\vec P| \circ T_u, (\hat P \cdot \vec s) \circ T_u) = F(|\vec P|, \hat P \cdot \vec s).
		\end{equation}
		On the other hand, $\vec A$ is invariant under time 
		reversal while $\vec P$ changes its sign, implying that 
		$(\vec A \cdot \vec P) \circ T_u = - \vec A \cdot \vec P$. 
		Combining this with \eqref{eq:pf_lem_scalar_product_mom_1} 
		and \eqref{eq:pf_lem_scalar_product_mom_2}, we obtain 
		$\left. \vec A \cdot \vec P \right|_{\Gamma^*} = 0$, 
		and continuity implies $\vec A \cdot \vec P = 0$.
	\end{proof}
\end{lem}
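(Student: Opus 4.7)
The plan is to form the scalar combination $\vec A \cdot \vec P$, use lemma \ref{lem:invar_function} to force it into the functional form $F(|\vec P|, \hat P \cdot \vec s)$, and then exploit time reversal to conclude that this function must vanish, because the two sides transform oppositely under $T_u$.

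First I would observe that $\vec A \cdot \vec P$ is invariant under spatial translations and is a scalar under spatial rotations: both $\vec A$ and $\vec P$ Poisson-commute with the translation generators (by the assumed invariance of $\vec A$ and by lemma \ref{lem:mom_spin_trafo} for $\vec P$), and their vector-style transformation laws under rotations combine to give $\{J_{ab}, \vec A \cdot \vec P\} = 0$. Hence lemma \ref{lem:invar_function} applies to $\vec A \cdot \vec P$ on $\Gamma^*$, yielding $(\vec A \cdot \vec P)|_{\Gamma^*} = F(|\vec P|, \hat P \cdot \vec s)$ for some $C^1$ function $F$.

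Next I would analyse how both sides behave under the time reversal $T_u$. On the left, the assumed $T_u$-invariance of $\vec A$ combined with $\vec P \circ T_u = -\vec P$ from lemma \ref{lem:time_rev} gives $(\vec A \cdot \vec P) \circ T_u = - \vec A \cdot \vec P$. On the right, $|\vec P|$ is manifestly $T_u$-invariant, while $\hat P \cdot \vec s$ is a product of two quantities that both flip sign under $T_u$ (the direction $\hat P$ follows the sign of $\vec P$, and $\vec s$ flips by lemma \ref{lem:time_rev}), hence is invariant. Thus $F(|\vec P|, \hat P \cdot \vec s) \circ T_u = F(|\vec P|, \hat P \cdot \vec s)$, and this combined with the sign flip on the left forces $F \equiv 0$, i.e.\ $\vec A \cdot \vec P = 0$ on $\Gamma^*$. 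Continuity of $\vec A \cdot \vec P$ then extends the identity from the dense open subset $\Gamma^*$ to all of $\Gamma$.

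The main obstacle I expect is tracking the $T_u$-behaviour of $\hat P \cdot \vec s$ correctly: both factors flip sign, so the product is invariant, but this has to be checked carefully against lemma \ref{lem:time_rev} rather than guessed. Once that is settled, the whole argument rides on the structural incompatibility between the $T_u$-antisymmetry of $\vec A \cdot \vec P$ and the $T_u$-symmetry of the only available translation- and rotation-invariant scalar combinations, forcing it to vanish identically.
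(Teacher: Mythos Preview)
Your proposal is correct and follows essentially the same route as the paper: form the scalar $\vec A\cdot\vec P$, apply lemma~\ref{lem:invar_function} to write it as $F(|\vec P|,\hat P\cdot\vec s)$ on $\Gamma^*$, and then contrast the $T_u$-invariance of $F$ with the $T_u$-antisymmetry of $\vec A\cdot\vec P$ to force vanishing, extending by continuity. The only cosmetic difference is that the paper verifies the $T_u$-invariance of $\hat P\cdot\vec s$ via its expression $\tfrac12\,{}^{(3)}\varepsilon^{abc}\hat P_a J_{bc}$, whereas you argue directly from the sign flips of $\hat P$ and $\vec s$ given by lemma~\ref{lem:time_rev}.
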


For zero spin, we can easily complete the proof of the 
Newton--Wigner theorem. Since the difference vector 
$\vec d$ is translation invariant and the $P_a$ form a 
complete involutive set on $\Gamma$, $\vec d$ must be a 
function of $\vec P$. Then since it is a vector under 
rotations, it must be of the form
\begin{equation}
	\vec d(\vec P) = F(|\vec P|) \vec P
\end{equation}
for some function $F$ of $|\vec P|$. Then, since 
according to lemma \ref{lem:invar_vector_no_mom_comp} 
$\vec d \cdot \vec P$ is zero, $\vec d$ is zero. Thus, 
for the spin-zero case, we have proved the Newton--Wigner 
theorem without any use of the condition of 
Poisson-commuting components of the position observable.

For the non-zero spin case, we continue as follows.

\begin{lem} \label{lem:invar_vector_in_basis}
	Let $\vec A$ be a $\mathbb R^3$-valued $C^1$ phase space 
	function on a classical elementary system with timelike 
	four-momentum and non-zero spin that is invariant under 
	translations, transforms as a vector under spatial 
	rotations and satisfies $\vec A \cdot \vec P = 0$. 
	Then it is of the form
	\begin{equation} \label{eq:invar_vector_in_basis}
		\vec A = B \hat P \times \vec s + C \hat P \times (\hat P \times \vec s)
	\end{equation}
	on $\Gamma^* \setminus \{\vec s \parallel \hat P\}$, 
	where $B$ and $C$ are $C^1$ functions of $|\vec P|$ 
	and $\hat P \cdot \vec s$, i.e.\ $C^1$ functions
	\[B,C \colon \mathbb R_+ \times (-S,S) \to \mathbb R.\]

	\begin{proof}
		For the whole of this proof, we will work on 
		$\tilde\Gamma := \Gamma^* \setminus \{\vec s \parallel \hat P\}$. 
		Since evaluated at each point of $\tilde\Gamma$, the 
		$\mathbb R^3$-valued functions 
		$\hat P, \hat P \times \vec s, \hat P \times (\hat P \times \vec s)$ 
		form an orthogonal basis of $\mathbb R^3$, and since 
		we have $\vec A \cdot \vec P = 0$, we can write 
		$\vec A$ in the form \eqref{eq:invar_vector_in_basis} with 
		coefficients $B,C$ given by
		\begin{align}
			B &= \frac{\vec A \cdot (\hat P \times \vec s)}{|\hat P \times \vec s|} \; , \\
			C &= \frac{\vec A \cdot (\hat P \times (\hat P \times \vec s))}{|\hat P \times (\hat P \times \vec s)|} \; .
		\end{align}
		Since $\vec A$, $\vec P$ and $\vec s$ are invariant 
		under translations and vectors under rotations, these 
		equations imply that $B,C$ are invariant under 
		translations and rotations. The result follows with 
		lemma \ref{lem:invar_function}.
	\end{proof}
\end{lem}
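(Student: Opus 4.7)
The plan is to exploit the fact that on $\tilde\Gamma := \Gamma^* \setminus \{\vec s \parallel \hat P\}$, the three vectors $\hat P$, $\hat P \times \vec s$ and $\hat P \times (\hat P \times \vec s)$ are mutually orthogonal and nowhere vanishing, so they constitute a pointwise orthogonal basis of $\mathbb R^3$. Since $\vec A \cdot \vec P = 0$ by hypothesis, $\vec A$ has no component along $\hat P$, hence must lie in the two-dimensional plane spanned by the other two basis vectors. This immediately yields the claimed representation $\vec A = B\,\hat P \times \vec s + C\,\hat P \times (\hat P \times \vec s)$, with uniquely determined coefficients $B$ and $C$.

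Next, I would extract $B$ and $C$ by taking inner products with the basis vectors, obtaining
\begin{equation}
B = \frac{\vec A \cdot (\hat P \times \vec s)}{|\hat P \times \vec s|^2}, \qquad C = \frac{\vec A \cdot \bigl(\hat P \times (\hat P \times \vec s)\bigr)}{|\hat P \times (\hat P \times \vec s)|^2}.
\end{equation}
These are $C^1$ on $\tilde\Gamma$ because the denominators are bounded away from zero there and $\vec A$, $\vec P$, $\vec s$ are all $C^1$.

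To conclude, I would check that $B$ and $C$ are translation- and rotation-invariant scalar functions on $\tilde\Gamma$. Translation-invariance is immediate from translation-invariance of $\vec A$ (by hypothesis) and of $\vec P, \vec s$ (by Lemma \ref{lem:mom_spin_trafo}). Rotation-invariance follows because the above expressions are built entirely from dot products, cross products and norms of three-vectors that all transform as vectors under rotations, hence are genuine rotation scalars. An application of Lemma \ref{lem:invar_function} then identifies $B$ and $C$ as $C^1$ functions of $|\vec P|$ and $\hat P \cdot \vec s$ only; the domain is precisely $\mathbb R_+ \times (-S,S)$ since the excluded locus $\{\vec s \parallel \hat P\}$ corresponds exactly to $\hat P \cdot \vec s = \pm S$.

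The only subtlety -- which in fact dictates the phrasing of the statement -- is recognising the need to excise the collinearity locus $\{\vec s \parallel \hat P\}$: there the proposed basis degenerates and the coefficients $B, C$ become ill-defined, which is why the open interval $(-S,S)$ appears in place of the closed one. Once that observation is in hand, the argument is essentially routine linear algebra combined with the previously established invariance lemma, so I do not expect any deeper obstacle.
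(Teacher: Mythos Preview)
Your proof is correct and follows essentially the same route as the paper: expand $\vec A$ in the orthogonal frame $\{\hat P,\ \hat P\times\vec s,\ \hat P\times(\hat P\times\vec s)\}$, observe that the resulting scalar coefficients are translation- and rotation-invariant, and invoke Lemma~\ref{lem:invar_function}. Your formulas for $B$ and $C$ with the squared norms in the denominator are in fact the correct ones for extracting the expansion coefficients (the paper's version omits the square, which appears to be a typo), though this has no bearing on the argument since either expression is a rotation scalar.
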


Now we consider again the difference vector 
$\vec d = \vec X - \vec x$. It satisfies 
$\vec d \cdot \vec P = 0$ by lemma \ref{lem:invar_vector_no_mom_comp}, 
and thus we have
\begin{equation}
	\vec X \cdot \vec P = \vec x \cdot \vec P.
\end{equation}
Since we assume that the components of $\vec X$ 
Poisson-commute with each other and that 
$\{X^a, P_b\} = \delta^a_b$, this implies
\begin{equation}
	\{X^a, \vec x \cdot \vec P\} = \{X^a, \vec X \cdot \vec P\} = X^a.
\end{equation}
Combining this with $\{x^a, \vec x \cdot \vec P\} = x^a$, 
we obtain
\begin{equation}
	\{d^a, \vec x \cdot \vec P\} = d^a.
\end{equation}
On the other hand, for any function $F$ of $\vec P$ and 
$\vec s$, we have
\begin{equation}
	\{F(\vec P, \vec s), \vec x \cdot \vec P\} = \{F(\vec P, \vec s), x^a\} P_a = - \frac{\partial F(\vec P, \vec s)}{\partial P_a} P_a = - |\vec P| \left. \frac{\partial F}{\partial |\vec P|} \right|_{\hat P = \mathrm{const.}, \vec s = \mathrm{const.}}.
\end{equation}
This implies
\begin{equation} \label{eq:der_diff}
	\vec d = - |\vec P| \left. \frac{\partial \vec d}{\partial |\vec P|} \right|_{\hat P = \mathrm{const.}, \vec s = \mathrm{const.}}.
\end{equation}
Combining lemmas \ref{lem:invar_vector_no_mom_comp} and 
\ref{lem:invar_vector_in_basis}, we know that $\vec d$ 
has the form \eqref{eq:invar_vector_in_basis} on 
$\Gamma^* \setminus \{\vec s \parallel \hat P\}$ for two 
functions $B, C \colon \mathbb R_+ \times (-S,S) \to \mathbb R$. 
Thus \eqref{eq:der_diff} implies the two equations
\begin{equation}
	B(|\vec P|, \hat P \cdot \vec s) = - |\vec P| \frac{\partial B(|\vec P|, \hat P \cdot \vec s)}{\partial |\vec P|}, \; C(|\vec P|, \hat P \cdot \vec s) = - |\vec P| \frac{\partial C(|\vec P|, \hat P \cdot \vec s)}{\partial |\vec P|}
\end{equation}
on $\mathbb R_+ \times (-S,S)$. These equations 
determine the $|\vec P|$ dependence of $B$ and $C$; 
they must be proportional to $|\vec P|^{-1}$. However, 
for $\vec d$ to be $C^1$ on the whole of $\Gamma$, in 
fact for \eqref{eq:invar_vector_in_basis} not to diverge 
as $|\vec P|\to 0$ even when coming from a \emph{single} 
direction $\hat P$, we then need $B$ and $C$ to vanish. 
Continuity implies $\vec d = 0$ on all of $\Gamma$. This 
finishes the proof of the Newton--Wigner theorem.
\end{proof}

\section{Conclusion}

In this paper we have studied the localisation 
problem for classical system whose phase space 
is a symplectic manifold. We focussed on the 
Newton--Wigner position observable and asked for precise 
characterisations of it in order to gain 
additional understanding, over and above that 
already known from its practical use for the 
solution of concrete problems of motion, e.g.,
in astrophysics \cite{Steinhoff:2011,Schaefer.Jaranowski:2018}. 
We proved two theorems that we believe advance our 
understanding in the desired direction: 
first we showed how Fleming's geometric scheme 
\cite{Fleming:1965a} in combination with the 
characterisation of worldlines through SSCs (Spin 
Supplementary Conditions) allows to give a 
precise meaning to, and proof of, the fact 
that the Newton--Wigner position is the unique 
centre of spin. Given that interpretation, it 
also offers an insight as to why the Newton--Wigner 
SSC uses a somewhat unnatural looking `hybrid' 
combination $f = u + \frac{P}{mc}$, where $u$ is 
`external' or `kinematical', and $P$ is `internal' 
or `dynamical'. Then, restricting to elementary 
systems, i.e.\ systems whose phase space admits a 
transitive action of the proper orthochronous 
Poincaré group, we proved again a uniqueness result 
to the effect that the Newton--Wigner observable 
is the unique phase space function whose 
components satisfy the `familiar' Poisson 
relations, provided it is continuously 
differentiable, time-reversal invariant, and 
transforms as a vector under spatial rotations. 
These properties seem to be the underlying reason 
for the distinguished rôle it plays in 
solution strategies like those of 
\cite{Steinhoff:2011,Schaefer.Jaranowski:2018}\footnote
{Despite the fact that on a more general level of 
theorisation other choices (characterised by other 
SSCs) are often considered more appropriate; 
see, e.g., \cite{Puetzfeld.EtAl:2015}.}, 
which in recent years have lead to astounding progress in 
the Hamiltonian analytical understanding of the dynamics 
of binary systems of spinning compact objects: the calculations 
have been pushed to ever higher post-Newtonian orders, 
starting from the next-to-leading order for spin--orbit 
and spin--spin effects, i.e.\ order $c^{-4}$, in 
\cite{Steinhoff.Hergt.Schaefer:2008a,Steinhoff.Schaefer.Hergt:2008,
Steinhoff.Hergt.Schaefer:2008b}, 
and most recently reaching a complete description 
at `4.5th' post-Newtonian order, i.e.\ order $c^{-9}$, 
in \cite{Levi.Mougiakakos.Vieira:2019,Antonelli.EtAl:2020}.
We believe 
that our results add a conceptually clear and 
mathematically precise Hamiltonian underpinning 
of what the choice of the Newton--Wigner observable 
entails, at least in a special-relativistic 
context or, more generally, in general-relativistic 
perturbation theory around Minkowski space.

\section*{Acknowledgements}
This work was supported by the Deutsche 
Forschungsgemeinschaft through the Collaborative 
Research Centre 1227 (DQ-mat), projects B08/A05.

\nocite{apsrev41Control}
\bibliography{NW_class,revtex-custom}

\pagebreak
\appendix

\section{Sign conventions for generators of special orthogonal groups}
\label{app:sign_convention_so}

Let $V$ be a finite-dimensional real vector space 
with a non-degenerate, symmetric bilinear form 
$g\colon V \times V \to \mathbb R$. Note that we do 
not assume anything about the signature of $g$. 
We introduce the `musical isomorphism'
\begin{equation} \label{eq:musical_isom}
	V \to V^*, v \mapsto v^\flat := g(v, \cdot)
\end{equation}
induced by $g$.

We fix a basis $\{e_a\}_a$ of $V$. As bases for its 
dual vector space $V^*$ we distinguish its natural 
dual basis $\{\theta^a\}_a$, where $\theta^a(e_b)=\delta^a_b$,
and the ($g$-dependent) image of $\{e_a\}_a$ under 
\eqref{eq:musical_isom}, which is just $\{e^\flat_a\}_a$, 
where $e^\flat_a=g_{ab}\theta^b$, so that $e^\flat_a(e_b)=g_{ab}$.
The reason for this will become clear now. 

For each $a,b \in \{1, \dots, \dim V\}$ we introduce 
the endomorphism
\begin{equation} \label{eq:lie_basis-1}
	B_{ab} := e_a \otimes e_b^\flat - e_b \otimes e_a^\flat \in \mathrm{End}(V),
\end{equation}
which satisfies 
\begin{equation}
	g(v, B_{ab}(w)) = g(v,e_a) g(e_b,w) - g(v,e_b) g(e_a,w) = -g(B_{ab}(v), w).
\end{equation}
This means that $B_{ab}$ is anti-self-adjoint 
with respect to $g$ and hence that it is an 
element of the Lie algebra $\mathfrak{so}(V,g)$ 
of the Lie group $\mathsf{SO}(V,g)$ of special 
orthogonal transformations of $(V,g)$:
\begin{equation}
	B_{ab} \in \mathfrak{so}(V,g).
\end{equation}
As $B_{ab} = -B_{ba}$, it is the set 
$\{B_{ab}: 1\le a < b\le \dim V\}$ which is linearly 
independent and of the same dimension as 
$\mathfrak{so}(V,g)$. Hence this set forms a basis of 
$\mathfrak{so}(V,g)$ so that any 
$\omega \in \mathfrak{so}(V,g)$ can be uniquely 
written in the form
\begin{equation}
	\omega = \sum_{1\le a < b \le \dim V} \omega^{ab} B_{ab} = \frac{1}{2} \omega^{ab} B_{ab} \; ,
\end{equation}
where 
\begin{equation} \label{eq:lie_basis_anti_self_adj}
	\omega^{ab} = -\omega^{ba}\,.
\end{equation}

This representation can easily be compared 
to the usual one in terms of the metric-independent basis 
$\{e_a\otimes\theta^b : 1\le a,b \le \dim V\}$ of 
$\mathrm{End}(V)$ in the following way: for 
$\omega=\omega^a_{\hphantom{a}c}\, e_a \otimes \theta^c$,
we have $\omega\in \mathfrak{so}(V,g)$ if and only if 
\begin{equation} \label{eq:end_basis_anti_self_adj}
	\omega^a_{\hphantom{a}c} \, g^{cb} = - \omega^b_{\hphantom{a}c} \, g^{ca} \; .
\end{equation}
It is the obvious simplicity of \eqref{eq:lie_basis_anti_self_adj}
as opposed to \eqref{eq:end_basis_anti_self_adj}
as conditions for $\omega\in\mathrm{End}(V)$
being contained in  
$\mathfrak{so}(V,g)\subset\mathrm{End}(V)$ that 
makes it easier to work with the basis 
$e_a\otimes e^\flat_b$ of $\mathrm{End}(V)$ rather 
than $e_a\otimes\theta^b$. Note that the components 
of $\omega$ with respect to the two bases considered 
above are connected by the equation
\begin{equation}
	\omega^{ab} = \omega^a_{\hphantom{a}c} \, g^{cb} \; .
\end{equation}

The basis elements $B_{ab}$ satisfy 
the commutation relations
\begin{align}
	[B_{ab}, B_{cd}] &= g_{bc} B_{ad} + g_{ad} B_{bc} - g_{ac} B_{bd} - g_{bd} B_{ac} \nonumber\\
	&= g_{bc} B_{ad} + \text{(antisymm.)},
\end{align}
where `antisymm.' is as explained below 
equation \eqref{eq:Poinc_gen}.

From now on, we will assume the basis $\{e_a\}_a$ to be 
orthonormal. For notational convenience, 
for $a,b\in \{1, \dots, \dim V\}$ we define
\begin{equation}
	\varepsilon_{ab} := g_{aa} g_{bb} = \pm 1
\end{equation}
which has the value $+1$ if $g_{aa} = g(e_a, e_a)$ 
and $g_{bb} = g(e_b,e_b)$ have the same sign, 
and $-1$ if they have opposite signs\footnote
	{Note that repeated indices on the same 
	level, i.e.\ both up or both down, are not to be 
	summed over.}.

We now want to compute the exponential  
$\exp(\alpha B_{ab}) \in \mathsf{SO}(V,g)$. 
At first, we note that
\begin{align}
	(B_{ab})^2 
	&= - g_{bb} e_a \otimes e_a^\flat 
	 - g_{aa} e_b \otimes e_b^\flat \nonumber\\
	&= - \varepsilon_{ab} \; \mathrm{Pr}_{ab}\,,
\end{align}
where $\mathrm{Pr}_{ab} := \mathrm{Pr}_{\mathrm{span}\{e_a, e_b\}}$ 
denotes the $g$-orthogonal projector onto the plane 
$\mathrm{span}\{e_a, e_b\}$ in $V$.\footnote
	{In the general case of two linearly independent 
	vectors $v,w \in V$, not necessarily orthonormal, the 
	orthogonal projector is given by
		\begin{equation}
			\mathrm{Pr}_{\mathrm{span}(v,w) = \frac{1}{g(v,v) g(w,w) - (g(v,w))^2}} \left[ g(w,w) \; v \otimes v^\flat + g(v,v) \; w \otimes w^\flat - g(v,w) \; (v \otimes w^\flat + w \otimes v^\flat) \right],
		\end{equation}
		implying
		\begin{align}
			(v \otimes w^\flat - w \otimes v^\flat)^2 &= - g(w,w) \; v \otimes v^\flat - g(v,v) \; w \otimes w^\flat + g(v,w) \; (v \otimes w^\flat + w \otimes v^\flat) \nonumber\\
			&= - \left[ g(v,v) g(w,w) - (g(v,w))^2 \right] \mathrm{Pr}_{\mathrm{span}(v,w)}.
		\end{align}
	}
Using this and $B_{ab} \circ \mathrm{Pr}_{ab} = B_{ab}$, 
the exponential series evaluates to
\begin{align}
	\exp(\alpha B_{ab}) 
	&= (\mathrm{id}_V - \mathrm{Pr}_{ab}) 
	 + \sum_{k=0}^\infty \frac{1}{(2k)!} \, 
	   \alpha^{2k} (-\varepsilon_{ab})^k \, 
	   \mathrm{Pr}_{ab} \nonumber\\ &\quad+ 
	   \sum_{k=0}^\infty \frac{1}{(2k+1)!} \, 
	   \alpha^{2k+1} (-\varepsilon_{ab})^k \, 
	    B_{ab} \circ \mathrm{Pr}_{ab} \nonumber\\
	&= (\mathrm{id}_V - \mathrm{Pr}_{ab})
	 + \left\{ \!\! \begin{aligned}
		&\cos(\alpha) \, \mathrm{id}_V + \sin(\alpha) \, 
		 B_{ab} \, , & \varepsilon_{ab} = +1 \\
		&\cosh(\alpha) \, \mathrm{id}_V + \sinh(\alpha) \, 
		 B_{ab} \, , & \varepsilon_{ab} = -1
	\end{aligned} \right\} \circ \mathrm{Pr}_{ab} \; .
\end{align}
Geometrically, this transformation is either a 
rotation by angle $\alpha$ (for $\varepsilon_{ab} = +1$) 
or a boost by rapidity $\alpha$ (for 
$\varepsilon_{ab} = -1$) in the plane 
$\mathrm{span}\{e_a, e_b\}$. The direction of the 
transformation depends on the signs of $g_{aa}, g_{bb}$:
\begin{itemize}
	\item $\varepsilon_{ab} = +1$:
		\begin{enumerate}[label=(\roman*)]
			\item $g_{aa} = g_{bb} = +1$: We have 
				$B_{ab}(e_a) = -e_b, B_{ab}(e_b) = e_a$. 
				Thus, $\exp(\alpha B_{ab})$ is a rotation 
				by $\alpha$ \emph{from $e_b$ towards $e_a$}.
			\item $g_{aa} = g_{bb} = -1$: We have 
				$B_{ab}(e_a) = e_b, B_{ab}(e_b) = -e_a$. 
				Thus, $\exp(\alpha B_{ab})$ is a rotation 
				by $\alpha$ \emph{from $e_a$ towards $e_b$}.
		\end{enumerate}
	\item $\varepsilon_{ab} = -1$:
		\begin{enumerate}[label=(\roman*)]
			\item $g_{aa} = +1, g_{bb} = -1$: We have 
				$B_{ab}(e_a) = -e_b, B_{ab}(e_b) = -e_a$. 
				Thus, $\exp(\alpha B_{ab})$ is a boost 
				by $\alpha$ \emph{`away' from $e_a + e_b$}.
			\item $g_{aa} = -1, g_{bb} = +1$: We have 
				$B_{ab}(e_a) = e_b, B_{ab}(e_b) = e_a$. 
				Thus, $\exp(\alpha B_{ab})$ is a boost 
				by $\alpha$ \emph{`towards' $e_a + e_b$}.
		\end{enumerate}
\end{itemize}

Now we will apply the preceding considerations to the case 
of (the `difference' vector space of) Minkowski spacetime, 
where for now \emph{we leave open the signature convention 
for the metric} (either $(+{--}-)$ or $(-{++}+)$). We work 
with respect to a positively oriented orthonormal basis 
$\{e_\mu\}_{\mu = 0, \dots, 3}$ where $e_0$ is timelike. 
Latin indices will denote spacelike directions.

In the case of `mostly minus' signature $(+{--}-)$, 
$B_{ab}$ generates rotations from $e_a$ towards $e_b$ and 
$B_{a0}$ generates boosts (with respect to $e_0$) in 
direction of $e_a$. In the case of `mostly plus' signature 
$(-{++}+)$, $B_{ba} = - B_{ab}$ generates rotations from 
$e_a$ towards $e_b$ and $B_{0a} = - B_{a0}$ generates 
boosts (with respect to $e_0$) in direction of $e_a$.

Thus, since we want to use the notation $J_{ab}$ for the 
spacelike rotational generator generating rotations from 
$e_a$ towards $e_b$, we have to set
\begin{equation} \label{eq:sign_convention_so}
	J_{\mu\nu} = \begin{cases}
		B_{\mu\nu} & \text{for $(+{--}-)$ signature},\\
		-B_{\mu\nu} & \text{for $(-{++}+)$ signature}
	\end{cases}
\end{equation}
for the Lorentz generators. Adopting this convention, 
boosts in direction of $e_a$ are then generated 
by $J_{a0}$. The commutation relations for the 
$J_{\mu\nu}$ are
\begin{equation}
	[J_{\mu\nu}, J_{\rho\sigma}] = \begin{cases}
		\eta_{\mu\sigma} J_{\nu\rho} + \text{(antisymm.)} & \text{for $(+{--}-)$ signature},\\
		\eta_{\mu\rho} J_{\nu\sigma} + \text{(antisymm.)} & \text{for $(-{++}+)$ signature},
	\end{cases}
\end{equation}
and general Lorentz algebra elements 
$\omega \in \mathrm{Lie}(\mathcal L)$ can be written as
\begin{equation}
	\omega = \pm \frac{1}{2} \omega^{\mu\nu} J_{\mu\nu} \; \text{with} \; \omega^{\mu\nu} = \omega^\mu_{\hphantom{\mu}\rho} \; \eta^{\rho\nu}
\end{equation}
in terms of their components $\omega^\mu_{\hphantom{\mu}\rho}$ 
as endomorphisms, where the upper/lower sign holds for 
$(+{--}-)$/$(-{++}+)$ signature.

\section{Notes on the adjoint representation}
\label{app:adj_rep}

Here we wish to make a few remarks and collect a 
few formulae concerning the adjoint and co-adjoint 
representation, which will be made use of in the main 
text. 

In the defining representation on $V$, an element 
$\Lambda \in \mathsf{GL}(V)$ is given in terms of the 
basis $\{e_a\}_a$ by the coefficients 
$\Lambda^a_{\phantom{a}_b}$, where
\begin{equation} \label{eq:coeff_def_rep}
	\Lambda e_a = \Lambda^b_{\phantom{b}a} \, e_b \; .
\end{equation}  
This defines a left action of $\mathsf{GL}(V)$ on 
$V$. The corresponding left action of 
$\mathsf{GL}(V)$ on the dual space $V^*$ is given 
by the inverse-transposed, i.e.\ 
$\mathsf{GL}(V) \times V^* \to V^*$, 
$(\Lambda,\alpha) \mapsto (\Lambda^{-1})^\top
\alpha := \alpha \circ \Lambda^{-1}$. For the  
basis $\{\theta^a\}_a$ of $V^*$ dual to the 
basis $\{e_a\}_a$ this means
\begin{equation}
	\theta^a\circ \Lambda^{-1} = (\Lambda^{-1})^a_{\phantom{a}b} \, \theta^b \; .
\end{equation}  
In contrast, for the basis $\{e_a^\flat\}_a$
of $V^*$, this reads in general
\begin{equation}
	e_b^\flat \circ \Lambda^{-1} = g^{ac} g_{bd} (\Lambda^{-1})^d_{\phantom{d}c} \, e_a^\flat \; ,
\end{equation}  
which for isometries $\Lambda \in \mathsf{O}(V,g)$ simply becomes
\begin{equation} \label{eq:coeff_def_rep_dual_isom}
	e^\flat_b\circ \Lambda^{-1} = \Lambda^a_{\phantom{a}b} \, e^\flat_a \; .
\end{equation}  

The adjoint representation of $\mathsf{GL}(V)$ on 
$\mathrm{End}(V)\cong V\otimes V^*$ or any Lie subalgebra
of $\mathrm{End}(V)$ is by conjugation, which for 
our basis \eqref{eq:lie_basis-1} implies, using \eqref{eq:coeff_def_rep} and \eqref{eq:coeff_def_rep_dual_isom},
\begin{equation}
	\mathrm{Ad}_\Lambda B_{ab}
	= \Lambda \circ B_{ab} \circ \Lambda^{-1}
	= \Lambda^c_{\phantom{c}a} \Lambda^d_{\phantom{d}b} \, B_{cd}
	\quad \text{for} \; \Lambda \in \mathsf{O}(V,g).
\end{equation}

The adjoint 
representation of the inhomogeneous 
group $\mathsf{GL}(V)\ltimes V$ on its Lie algebra 
$\mathrm{End}(V)\oplus V$ is given by, for any 
$X \in \mathrm{End}(V)$ and $y \in V$,
\begin{equation}
	\mathrm{Ad}_{(\Lambda,a)}(X,y) =
	\left( \Lambda\circ X\circ\Lambda^{-1} ,
	\Lambda y - (\Lambda \circ X \circ \Lambda^{-1}) a \right).
\end{equation}
In the main text we will use this formula for 
$(\Lambda,a)$ being replaced by its inverse 
$(\Lambda,a)^{-1} = (\Lambda^{-1}, -\Lambda^{-1}a)$:
\begin{equation}
	\mathrm{Ad}_{(\Lambda,a)^{-1}}(X,y) =
	\left( \Lambda^{-1} \circ X \circ \Lambda ,
	\Lambda^{-1} y + (\Lambda^{-1} \circ X) a \right)
\end{equation}  
Applied to the basis vectors separately, 
i.e.\ to $(X,y) = (0,e_b)$ and $(X,y) = (B_{bc},0)$,
for $\Lambda \in \mathsf{O}(V,g)$ we get 
\begin{subequations} \label{eq:ad_rep}
\begin{alignat}{2}
	&\mathrm{Ad}_{(\Lambda,a)^{-1}}(0,e_b)
	&&= \left( 0, (\Lambda^{-1})^c_{\phantom{c}b} e_c \right) \\
	&\mathrm{Ad}_{(\Lambda,a)^{-1}}(B_{bc},0)
	&&= \left( (\Lambda^{-1})^d_{\phantom{d}b} (\Lambda^{-1})^e_{\phantom{e}c} \, B_{de} ,
	- a_b (\Lambda^{-1})^d_{\phantom{d}c} e_d
	+ a_c (\Lambda^{-1})^d_{\phantom{d}b} e_d \right)
\end{alignat}  
\end{subequations}
where $a_b := e_b^\flat(a) = g_{bc} a^c$ in the second 
equation. From these equations we immediately 
deduce \eqref{eq:coadjoint_rep_components} in the 
case of four spacetime dimensions (greek indices) 
and signature mostly plus, in which case 
$J_{\mu\nu} = -B_{\mu\nu}$ according to
\eqref{eq:sign_convention_so}.

\end{document}